\def\BibTeX{{\rm B\kern-.05em{\sc i\kern-.025em b}\kern-.08em
    T\kern-.1667em\lower.7ex\hbox{E}\kern-.125emX}}
\newcommand{\cond}{\mathchoice{\,\vert\,}{\mspace{2mu}\vert\mspace{2mu}}{\vert}{\vert}}
\DeclareMathOperator\E{\textsf{E}}
\DeclarePairedDelimiter\ceil{\lceil}{\rceil}
\DeclarePairedDelimiter{\floor}{\lfloor}{\rfloor}
\let\P\relax
\DeclareMathOperator\P{\textsf{P}}
\newcommand{\td}[1]{\tilde{#1}}
\newcommand{\hp}{\hat{p}}
\newcommand{\ha}{\hat{\alpha}}
\newcommand{\rmins}{\mathrm{ins}}
\theoremstyle{definition}
\newtheorem{definition}{Definition}
\newtheorem{remark}{Remark}
\theoremstyle{plain}
\newtheorem{theorem}{Theorem}
\newtheorem{lemma}{Lemma}
\begin{document}

\title{Noisy Sorting Capacity\\

}

\author{Ziao Wang, Nadim Ghaddar, Banghua Zhu, and Lele Wang
\thanks{Ziao Wang is with the Department of Electrical and Computer Engineering, University of British Columbia, Vancouver, BC V6T1Z4, Canada (email: ziaow@ece.ubc.ca).}
\thanks{Nadim Ghaddar is with the Department of Electrical and Computer Engineering, University of Toronto, Toronto, ON M5S 3G8, Canada, (email: nadim.ghaddar@utoronto.ca).}
\thanks{Banghua Zhu is with the Department of Electrical Engineering and Computer Sciences, University of California Berkeley, Berkeley, CA 94720, USA, (email: banghua@cs.berkeley.edu).}
\thanks{Lele Wang is with the Department of Electrical and Computer Engineering, University of British Columbia, Vancouver, BC V6T1Z4, Canada (email: lelewang@ece.ubc.ca).}
\thanks{This work was presented in part at the 2022 IEEE International Symposium on Information Theory~\cite{isit2022paper} and in part at the 2023 IEEE International Symposium on Information Theory~\cite{isit2023paper}.}}
\maketitle

\begin{abstract}

Sorting is the task of ordering $n$ elements using pairwise comparisons. It is well known that $m=\Theta(n\log n)$ comparisons are both necessary and sufficient when the outcomes of the comparisons are observed with no noise. In this paper, we study the sorting problem when each comparison is incorrect with some fixed yet unknown probability $p$. Unlike the common approach in the literature which aims to minimize the number of pairwise comparisons $m$ to achieve a given desired error probability, we consider randomized algorithms with expected number of queries $\textsf{E}[M]$ and aim at characterizing the maximal \emph{sorting rate} $\frac{n\log n}{\E[M]}$ such that the ordering of the elements can be estimated with a vanishing error probability asymptotically. The maximal rate is referred to as the \emph{noisy sorting capacity}. In this work, we derive upper and lower bounds on the noisy sorting capacity. The two lower bounds --- one for fixed-length algorithms and one for variable-length algorithms --- are established by combining the insertion sort algorithm with the well-known Burnashev--Zigangirov algorithm for channel coding with feedback. Compared with existing methods, the proposed algorithms are universal in the sense that they do not require the knowledge of $p$, while maintaining a strictly positive sorting rate. Moreover, we derive a general upper bound on the noisy sorting capacity, along with an upper bound on the maximal rate that can be achieved by sorting algorithms that are based on insertion sort.

\end{abstract}


\section{Introduction} \label{sec:intro}

Sorting is a fundamental operation in computing. Given a set of $n$ elements, a sorting algorithm sequentially submits queries consisting of pairs of elements to be compared, and outputs a permutation that arranges the elements into some global order. In the design of a sorting algorithm, it is desirable to minimize the number of queries (or pairwise comparisons). Clearly, any sorting algorithm requires at least $\log(n!) = n\log n - n\log e + \Theta(\log n)$ queries,
since there are $n!$ permutations for a set of $n$ elements and each query reveals at most one bit of information. It is well known that using, for example, the merge sort algorithm, $n\log n + n + O(\log n)$ queries are sufficient to sort $n$ elements~\cite{Knuth1997}. Therefore, to sort a set of $n$ elements, it is both necessary and sufficient to submit $n\log n(1+o(1))$ queries.

Now suppose there is observation noise in the answers to the queries. For example, the answer to each query is flipped independently with some fixed and unknown probability $p \in (0,1/2)$. How many more queries are needed to sort $n$ elements? We aim to find the minimum number of queries such that the probability of recovering the correct permutation goes to one as $n$ tends to infinity. As we shall see in this paper, $\Theta(n \log n)$ queries is both necessary and sufficient for sorting with i.i.d. noise. Towards this end, our goal will be to characterize the \emph{sorting rate}
\begin{equation} \label{eqn:sorting_rate}
R \triangleq \frac{n \log n}{\E[M]},
\end{equation}
where $M$ denotes the number of queries, and the expectation accounts for (possibly) any randomness in the querying strategy. The sorting rate can be interpreted as the ratio of the minimum number of queries needed on average in the noiseless case to the average number of queries used in the noisy setting.


Although the sorting rate of the merge sort algorithm in the noiseless case can be made arbitrarily close to 1 when $n$ is large enough, it is not immediately obvious if, in the noisy setting, a strictly positive rate is even possible. For example, let's consider a modified version of the merge sort algorithm, where every comparison is repeated a number of times, and the outcome is taken as the majority vote of the responses. Since there are at least $k \triangleq n\log n(1+o(1))$ distinct pairwise comparisons to be made by the merge sort algorithm, there should exist at least one comparison that will be repeated at most $\left\lfloor\frac{\E[M]}{k}\right\rfloor \leq \left\lfloor\frac{1}{R}\right\rfloor \triangleq \mu$ times on average. Thus, the probability of error of this modified version of merge sort can be lower bounded by the error probability of this particular comparison, which can be further lower bounded by
\[
\sum_{i = \lceil \mu/2\rceil}^{\mu}  \binom{\mu}{i} p^i(1-p)^{\mu-i} \geq \frac{1}{\sqrt{2\mu}} \exp\left\{-\mu D\left(\frac{1}{2}  \Big\| p\right)\right\} \, \not\to 0,
\]
where $D(a \| b) \triangleq a\log\frac{a}{b} + (1-a)\log\frac{1-a}{1-b}$ and the lower bound is due to~\cite[Lemma 4.7.2]{Ash1990}. Therefore, for any non-vanishing rate $R$, a repetition-based merge sort algorithm yields a non-vanishing probability of error.

\subsection{Related Work}
The noisy sorting problem has been studied in the literature in a more general setting known as active ranking~\cite{Mohajer2017,Falahatgar2017,Shah2018,Heckel2019,Agarwal2017,lou2022active, ren2019sample}. In this setting, it is assumed that the observation noise $p_{ij}$ for the comparison of a pair of elements $i$ and $j$ is unknown and could be different for different pairs. Under this more general noise model, most of the existing work focuses on finding the optimal order for the required number of queries rather than characterizing the tight constant dependence. When specialized to the noise model considered in this work (i.e., an unknown yet fixed noise probability $p$ for all pairs of elements), the algorithms presented in~\cite{Mohajer2017,Falahatgar2017,Shah2018,Heckel2019,Agarwal2017,lou2022active} yield vanishing sorting rates. In contrast, the algorithm in~\cite{ren2019sample} can achieve a strictly positive sorting rate, an upper bound of which is shown through the black curve in Fig.~\ref{fig:comparison}.

On the other hand, the case of fixed and known $p$ (which is a special case of the setting that we consider) has been considered in~\cite{Feige1994}. The proposed sorting algorithm in~\cite{Feige1994} is based on noisy comparison trees, a key ingredient of which is the knowledge of the noise probability $p$.\footnote{In fact, the algorithm in~\cite{ren2019sample} for unknown $p$ can be viewed as a generalized version of the algorithm in~\cite{Feige1994}. In Section~\ref{sec:knownAlg}, we will derive the achievable sorting rate of~\cite{Feige1994} (which is an upper bound to the achievable rate of~\cite{ren2019sample}) and show that our proposed algorithms can achieve strictly larger rates.} In a more recent work, the authors of~\cite{gu2023optimal} fully characterized the noisy sorting capacity for the special case of known $p$.\footnote{Note that~\cite{gu2023optimal} is a follow-up work of our conference publication~\cite{isit2022paper} and appeared in preprint form after the submission of this paper.} The algorithm in~\cite{gu2023optimal} that achieves the sorting capacity is highly dependent upon the knowledge of $p$, and an extension of the algorithm for the case of unknown $p$ is not immediately obvious. In this paper, we give upper and lower bounds on the noisy sorting capacity for the more general and practical setting of unknown $p$.

Other related but different settings for noisy sorting in the literature include the cases when some distance metric for permutations is to be minimized (rather than the the probability of error)~\cite{Ailon2008, Braverman2009,Ailon2011adv,Negahban2012,Wauthier2013,Rajkumar2014,Shah2016,Shah2018,Mao2018}, when the noise for each pairwise comparison is not i.i.d. and is determined by some noise model (e.g. the Bradley--Terry--Luce model\cite{BTL1952})~\cite{Ajtai2009,Negahban2012,Rajkumar2014,Chen2015,Chen2017,ren2018}, or when the ordering itself is restricted to some subset of all permutations~\cite{Jamieson2011,Ailon-Begleiter2011}.

\begin{figure}[t]
	\centering
	\vspace{-1em}
	\hspace*{-1.25em}
	\includegraphics[scale=0.7]{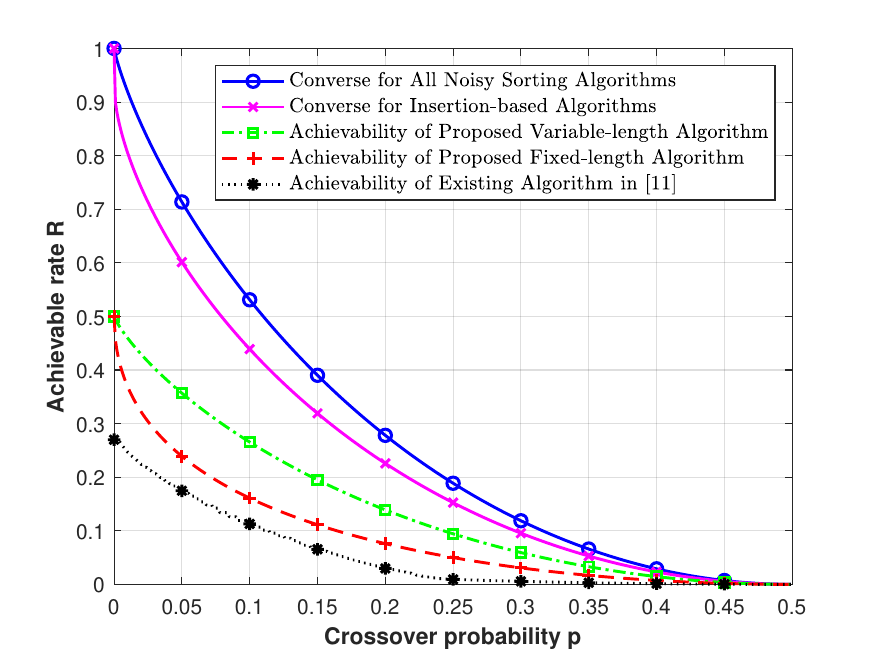}
	\caption{Comparison of achievable rates of the proposed noisy sorting algorithms and the existing algorithm, together with the converse bounds.} 
	\label{fig:comparison}
\end{figure}

\subsection{Main Contributions}
Inspired by Shannon’s formulation of the channel coding problem, the noisy sorting problem is formulated in this work as one of finding the sorting capacity, which is the maximum sorting rate $R$ such that the probability of error can be made arbitrarily small when the number of items $n$ is sufficiently large. This paper provides upper and lower bounds on the sorting capacity when the observation noise in each query is i.i.d. Bern$(p)$, and $p\in(0,1/2)$ is an unknown parameter. An intuitive upper bound on the sorting capacity follows from the feedback capacity of the $\mathrm{BSC}(p)$ channel. Since each query can convey at most $1-H(p)$ bits of information about the ordering of the elements, it takes on average $\log(n!)/(1-H(p))$ queries to convey the information contained in an unknown ordering. This leads to the upper bound $1-H(p)$ on the sorting capacity, which is shown as the solid blue curve in Fig.~\ref{fig:comparison}. We make this intuitive argument formal in Theorem~\ref{thm:vl-converse}. Conversely, the lower bounds --- one for fixed-length and one for variable-length algorithms --- are achieved by two noisy sorting algorithms that use the insertion sort algorithm along with the fixed-length and variable-length versions of the Burnashev--Zigangirov algorithm, which was introduced for the task of \emph{searching} in the presence of noise. Notably, the Burnashev--Zigangirov algorithm assumes knowledge of the parameter $p$. In the setting of unknown $p$, our proposed noisy sorting algorithms compute an estimate $\hp$ of $p$ prior to sorting, and then apply the Burnashev--Zigangirov  algorithm using the estimated parameter $\hat{p}$. A rigorous analysis of the error probability and the query complexity of the Burnashev-Zigangirov algorithm under the estimated parameter $\hat{p}$ allows to derive two achievability bounds for noisy sorting: one for the rates that can be achieved by fixed-length sorting algorithms (see the dashed red curve in Fig.~\ref{fig:comparison}), and one for the rates that can be achieved by variable-length sorting algorithms (see the dotted-and-dashed green curve in Fig.~\ref{fig:comparison}). Moreover, since the best known achievable sorting rate in the literature for the case of unknown $p$ is achieved by~\cite{ren2019sample} using an algorithm that is based upon the existing algorithm in~\cite{Feige1994}, we refine the analysis of the algorithm in~\cite{Feige1994} and show that~\cite{ren2019sample} cannot achieve a larger rate than the dotted black curve in Fig.~\ref{fig:comparison}. Hence, our proposed algorithms can achieve strictly larger rates compared to the existing literature for all values of $p$. Finally, since all noisy sorting algorithms that we considered are based upon the insertion sort algorithm, we derive an upper bound on the maximum rate that can be achieved by such a family of sorting algorithms. The upper bound is shown in the solid magenta curve of Fig.~\ref{fig:comparison}, which is strictly smaller than the $1-H(p)$ bound for general algorithms (i.e., ones that are not restricted to use the insertion sort algorithm). The bounds derived in this paper can be seen as a first step towards establishing a complete characterization of the sorting capacity when the noise probability $p$ is unknown. For a more formal description of the main results of this paper, see Section~\ref{sec:results}.

\subsection{Paper Organization and Notation}
The rest of the paper is organized as follows. In Section~\ref{sec:formulation}, we formally define the noisy sorting problem and state the definitions of noisy sorting rate and noisy sorting capacity. We present our main results in Section~\ref{sec:results}, which are achievability theorems and two two converse theorems for the noisy sorting problem. In Section~\ref{sec:achievability}, we prove the two achievability theorems by proposing two noisy sorting codes. The two converse theorems are proved in Section~\ref{sec:proof-conv}. In Section~\ref{sec:conclusion}, we conclude our results and introduce some open problems.

Notation: For an integer $n$, let $[n] = \{1,2,\ldots,n\}$. Let $\log(\cdot) = \log_2(\cdot)$. We follow the standard order notation: $f(n) = O(g(n))$ if $\lim_{n \to \infty} \frac{|f(n)|}{g(n)} < \infty$; $f(n) = \Omega(g(n))$ if $\lim_{n \to \infty}\frac{f(n)}{g(n)} >0$; $f(n) = \Theta(g(n))$ if $f(n) = O(g(n))$ and $f(n) = \Omega(g(n))$; $f(n) = o(g(n))$ if $\lim_{n \to \infty}\frac{f(n)}{g(n)} = 0$; $f(n) = \omega(g(n))$ if $\lim_{n \to \infty}\frac{|f(n)|}{|g(n)|} = \infty$; and $f(n) \sim g(n)$ if $\lim_{n \to \infty}\frac{f(n)}{g(n)} = 1$.

\section{Problem Formulation and Main Results} \label{sec:formulation-results}
\subsection{Noisy Sorting Problem} \label{sec:formulation}
Suppose an agent wishes to sort a set of $n$ elements based on noisy pairwise comparisons, where each comparison is incorrect with some fixed and unknown probability $p \in (0,1/2)$. We assume that $p$ is bounded away from $0$ and $1/2$ throughout the paper. The agent submits a query consisting of a pair of indices $i$ and $j$, and receives a noisy response of whether element $i$ is ranked higher or lower than element $j$. The goal is to estimate the total ordering of the elements\footnote{For simplicity, we assume that the ranking metrics among the elements are distinct, i.e., the total ordering is unique.}. More formally, let $\{\theta_1,\ldots,\theta_n\}$ be a finite set of real numbers to be sorted. An ordering is characterized by a permutation mapping $\pi:[n]\to[n]$ such that
\[
\theta_{\pi(1)} < \theta_{\pi(2)} < \cdots < \theta_{\pi(n)}.
\]
At the $k$th time step, the agent submits the query $(U_k,V_k) \triangleq (\theta_i,\theta_j)$ for some $i,j \in [n]$, $i \neq j$, and gets the response $Y_k$ which can be expressed as
\[
Y_k = \mathbbm{1}_{\{U_k < V_k\}} \oplus Z_k,
\]
where $\mathbbm{1}_{\{\cdot\}}$ denotes the indicator function, and $Z_k \sim \mathrm{Bern}(p)$ is a sequence of i.i.d. random variables. The agent aims to compute an estimate $\hat{\pi}$ of $\pi$ based on the responses.

\begin{definition}
An \emph{adaptive} querying strategy $\{f_k\}_{k\geq 1}$ associated with a stopping time (i.e., number of queries) $M$ is a causal protocol for determining the queries
\[
    (U_k,V_k) = f_k(Y^{k-1}, U^{k-1}, V^{k-1})
\]
in the $k$th time step. If the number of queries $M$ is predetermined (i.e., independent of the query responses), we say that the querying strategy is \emph{fixed-length}. Otherwise, the strategy may have a variable stopping time which can depend on the query responses. 
\end{definition}

\begin{definition}
   An $(R,n)$ \emph{noisy sorting code} for the noisy sorting problem consists of an adaptive querying strategy $\{f_k\}_{k\geq 1}$ associated with a stopping time $M$, and an estimator $\hat{\Pi} = \hat{\pi}(Y^M,U^M,V^M)$, where $\E[M] = \frac{n\log n}{R}$ is the expected number of queries, and $R$ is the sorting rate. When the querying strategy is fixed-length, we refer to the code as an $(R,n)$ \emph{fixed-length noisy sorting code}.
\end{definition}



\begin{definition}
A sorting rate $R$ is said to be \emph{achievable} for the noisy sorting problem if there exists a sequence of $(R,n)$ noisy sorting codes such that
\[
	\lim\limits_{n\to \infty}\max_{\pi}\P\{\hat{\Pi} \neq \pi\} = 0.
\]
\end{definition}

\begin{definition}
A sorting rate $R_\mathrm{f}$ is said to be \emph{fixed-length-achievable} for the noisy sorting problem if there exists a sequence of $(R_\mathrm{f},n)$ fixed-length noisy sorting codes such that
\[
	\lim\limits_{n\to \infty}\max_{\pi}\P\{\hat{\Pi} \neq \pi\} = 0.
\]
\end{definition}

\begin{definition}
The \emph{noisy sorting capacity} $C(p)$ is defined as the supremum over all achievable sorting rates for the noisy sorting problem when the observation noise is i.i.d. $\mathrm{Bern}(p)$. The \emph{fixed-length noisy sorting capacity} $C_\mathrm{f}(p)$ is defined analogously.
\end{definition}

\begin{remark}
Because fixed-length noisy sorting codes are special instances of noisy sorting codes, it follows that $C_\mathrm{f}(p) \leq C(p)$ for any $p\in(0,1/2)$.
\end{remark}


\begin{remark}[Reduction to average error probability]\label{rem:reduction}
    Suppose we have an $(R,n)$ noisy sorting code that achieves an \emph{average} error probability $\P(\hat{\Pi}\neq\Pi)\triangleq \epsilon$ for a uniformly chosen permutation $\Pi \sim \mathrm{Unif}(\mathcal{S}_n)$. Such a code can be used to construct an $(R,n)$ noisy sorting code achieving a \emph{worst case} error probability identical to the error of the average code, i.e., $\max_{\pi}\P(\hat{\Pi}\neq \pi) = \epsilon$. To see this, consider a uniformly chosen permutation $\bar{\Pi}$ and use it to rename the items $\theta_1,\ldots,\theta_n$ by $\theta_{\bar{\Pi}(1)},\ldots,\theta_{\bar{\Pi}(n)}$. Then, the noisy sorting code achieving an average error probability can be used to estimate the composed permutation $\bar{\Pi}\circ \pi$, which has a uniform distribution. Thus, we have that $\P(\hat{\Pi}\neq (\bar{\Pi}\circ \pi) )= \epsilon$. Since we know $\bar{\Pi}$, if we correctly estimate $\hat{\Pi}=\bar{\Pi}\circ \pi$, we can always retrieve $\pi$ by applying the inverse of $\bar{\Pi}$ to $\hat{\Pi}$. Therefore, when we analyze the worst case error probability of a noisy sorting code, we can assume without loss of generality that the true permutation $\pi$ has a uniform prior.

\end{remark}

\begin{algorithm}[t]
\caption{Insertion-Based Noisy Sorting Codes} \label{alg:ins}
\DontPrintSemicolon
\SetKwInOut{Input}{Input}
\SetKwInOut{Output}{Output}
\Input{Unordered elements $\theta_1,\ldots,\theta_n$.}
\Output{Permutation estimate $\hat{\pi}$.}
\nl Initialize $\sigma$ as a $1\times 1$ array with element $n$.\\
\For{$i=1,\ldots, n-1$}{
\nl    $\hat{l} \gets \texttt{SEARCH}\big((\theta_{\sigma(1)},\ldots,\theta_{\sigma(i)}), \: \theta_{n-i}\big)$.\\
\nl    $\sigma \gets \big( \sigma(1:\hat{l}-1), \:\, n-i, \:\, \sigma(\hat{l}:\text{end}) \big)$.
}
\nl $\hat{\pi}\gets \sigma$. \tcp*{Treat the $1\times n$ array $\sigma$ as a permutation}
\nl \Return{$\hat{\pi}$}
\end{algorithm}

We are also interested in a family of \emph{insertion-based noisy sorting codes}, which we define next. These codes are inspired by the insertion sort algorithm for sorting in the noiseless case. This algorithm successively inserts each element in its estimated position with respect to the previously inserted elements. That is, the algorithm first inserts $\theta_{n-1}$ in its estimated position with respect to $\theta_{n}$. Then, $\theta_{n-2}$ is inserted in its estimated position with respect to $\{\theta_{n-1},\theta_n \}$, and so on, until all elements are inserted. Hence, in the $i$th iteration step, the algorithm runs an instance of a searching algorithm to insert $\theta_{n-i}$ in its position with respect to the previously ordered list. In the presence of noise, a similar technique can be followed using a suitable \emph{noisy searching} algorithm. A pseudocode describing the family of insertion-based noisy sorting codes is given in Algorithm~\ref{alg:ins}, where $\texttt{SEARCH}\big((\theta_{\sigma(1)},\ldots,\theta_{\sigma(i)}), \: \theta_{n-i}  \big)$ corresponds to the noisy searching algorithm which returns the position $\hat{l}$ of $\theta_{n-i}$ within the ordered list $(\theta_{\sigma(1)},\ldots,\theta_{\sigma(i)})$, assuming the ordering generated from the previous insertions is correct. Note that different algorithms in the family of insertion-based noisy sorting codes correspond to different noisy searching algorithms. The sorting rate of an insertion-based noisy sorting code is defined as in~(\ref{eqn:sorting_rate}), where $\E[M]$ is the expected number of queries across all searching iterations.


\begin{definition}
A sorting rate $R_{\mathrm{ins}}$ is said to be \emph{insertion-sort-achievable} for the noisy sorting problem if there exists a sequence of $(R_{\mathrm{ins}}, n)$ insertion-based noisy sorting codes such that
\[
	\lim\limits_{n\to \infty}\max_{\pi} \P\{\hat{\Pi} \neq \pi\} = 0.
\]
\end{definition}

\begin{definition}
The \emph{insertion-based noisy sorting capacity} $C_{\mathrm{ins}}(p)$ is defined as the supremum over all insertion-sort-achievable sorting rates for the noisy sorting problem when the observation noise is i.i.d. $\text{Bern}(p)$.
\end{definition}

\begin{remark}
Since insertion-based noisy sorting codes are special instances of noisy sorting codes, it follows that $C_{\mathrm{ins}}(p) \leq C(p)$ for any $p\in(0,1/2)$.
\end{remark}

In this work, we derive upper and lower bounds on the sorting capacity $C(p)$. We also derive a lower bound on the fixed-length noisy sorting capacity $C_\mathrm{f}(p)$ and an upper bound on the insertion-based noisy sorting capacity $C_{\mathrm{ins}}(p)$. Note that under this information theoretic formulation of the noisy sorting problem, both the number of elements and the expected number of queries are taken to infinity, while keeping the ratio of the two (i.e., the sorting rate) fixed. The goal here is to characterize the maximum sorting rate that can allow to estimate the permutation $\pi$ with a vanishing error probability asymptotically. Note that this formulation is quite different from the approach taken in the existing literature, where the number of elements is considered to be fixed, and the goal is to minimize the expected number of queries to achieve some desired error probability. One can see that our formulation of the problem is similar in nature to Shannon's formulation of the channel coding problem.

\subsection{Main Results} \label{sec:results}
The main results of this paper can be summarized in four theorems. The first two theorems are achievability results. They provide lower bounds for the noisy sorting capacity and fixed-length noisy sorting capacity respectively. Both bounds strictly improve the rate attained by the algorithm in~\cite{Feige1994}. 
The other two theorems are converse results. They provide upper bounds for noisy sorting capacity and insertion-based noisy sorting capacity respectively.

\begin{theorem}[Achievability] \label{thm:vl-achievability}
Any sorting rate
$$R < \frac12(1-H(p))$$
 is achievable for the noisy sorting problem using the noisy sorting algorithm presented in Section~\ref{sec:proof-vl-achievability}, where $H(\cdot)$ denotes the binary entropy function. In other words, $C(p) \geq \frac12(1-H(p))$.
\end{theorem}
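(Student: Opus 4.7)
The plan is to combine the insertion-sort skeleton of Algorithm~\ref{alg:ins} with a variable-length \texttt{SEARCH} subroutine built from the Burnashev--Zigangirov (BZ) algorithm for channel coding with feedback over BSC$(p)$, executed in two phases per insertion: a fixed-length \emph{communication} phase that produces a tentative estimate of the slot, followed by a variable-length \emph{confirmation} phase that drives the error probability down to a prescribed level. By Remark~\ref{rem:reduction} we may restrict attention to the average error probability under a uniform prior on $\pi$, so at the start of the $i$-th insertion---conditioned on all earlier insertions being correct---the true slot of $\theta_{n-i}$ is uniform on the $i+1$ available positions.

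In the communication phase of iteration $i$, BZ is run for $\lceil \log(i+1)/R\rceil$ queries at some rate $R<C:=1-H(p)$, producing a tentative slot estimate $\hat{l}_0$ with error probability bounded by a constant $\eta<1$ independent of $i$. In the confirmation phase, a binary sequential test (again implemented as a BZ scheme at rate $R$) is run to decide between ``the true slot is $\hat{l}_0$'' and its complement, with a stopping rule that guarantees conditional error probability at most $\epsilon_n := n^{-(1+\delta)}$. The expected length of the confirmation phase is at most $\log(1/\epsilon_n)/R \cdot (1+o(1))$, giving a per-insertion expected length of
\[
\E[M_i] \;\leq\; \frac{\log(i+1) + \log(1/\epsilon_n)}{R}(1+o(1)) \;\leq\; \frac{\log(i+1) + (1+\delta)\log n}{C}(1+o(1)).
\]
A union bound over the $n-1$ insertions gives total error probability $(n-1)\epsilon_n = O(n^{-\delta}) \to 0$. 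Summing the per-insertion expected lengths with $\sum_{i=1}^{n-1}\log(i+1) = \log(n!) \sim n\log n$,
\[
\E[M] \;\leq\; \frac{n\log n + (1+\delta)\,n\log n}{C}(1+o(1)) \;=\; \frac{(2+\delta)\,n\log n}{C}(1+o(1)),
\]
so that $\frac{n\log n}{\E[M]}$ approaches $\frac{C}{2+\delta}$. Since $R$ can be chosen arbitrarily close to $C$ and $\delta>0$ is arbitrary, every sorting rate $R<\frac{1}{2}(1-H(p))$ is achievable.

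The main obstacle is the rigorous analysis of the confirmation phase: one must show that $O(\log(1/\epsilon_n)/C)$ BZ queries at rate close to $C$ suffice to drive the conditional probability of a wrong decision below $\epsilon_n$, uniformly across $i\in\{1,\ldots,n-1\}$ and uniformly over the possible tentative estimates $\hat{l}_0$. I would address this by reducing the confirmation to a noisy hypothesis test between the tentative estimate and the $i$ alternatives, and then invoking a BZ-style feedback coding argument for this test, observing that the feedback capacity of the induced channel remains $C - o(1)$ even under the highly non-uniform posterior produced by the communication phase. A secondary issue is the boundary regime where $i$ is small (so that $\log(i+1) \ll \log n$); there the expected length is dominated by the confirmation term, but this contributes only $O(n\log n/C)$ to $\E[M]$ and is already absorbed into the leading asymptotic above.
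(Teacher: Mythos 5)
Your overall architecture coincides with the paper's: insertion sort with a variable-length, BZ-type noisy-searching subroutine per insertion, per-insertion error set to roughly $n^{-(1+\delta)}$, a union bound over the $n-1$ insertions, and the factor-$2$ accounting $\sum_i \log(i+1)+n\log(1/\epsilon_n)\sim 2n\log n$, all divided by $1-H(p)$. The difference is that the paper's entire technical content lies in the per-insertion guarantee that you leave unproven: Theorem~\ref{thm:var-BZ-length} shows that the variable-length BZ algorithm (Algorithm~\ref{alg:vl-bz}) attains conditional error at most $P_e$ with $\E[\td{M}]\le \frac{\log n-\log P_e+\log\frac{1-p}{p}}{1-H(p)}$, proved via a submartingale drift bound $\E[Z_k(l^*)\cond \bar q_{k-1}]-Z_{k-1}(l^*)\ge 1-H(p)$ together with an optional-stopping lemma. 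Your proposal asserts exactly this kind of bound (``$O(\log(1/\epsilon_n))/R$ queries at rate close to $C$ suffice, uniformly in $i$ and in the tentative estimate'') and explicitly defers it as the ``main obstacle,'' so the core of the proof is missing rather than replaced by an alternative argument. (The uniformity of the true slot at each insertion, which you also assert, is proved in the paper via the Lehmer-code bijection and independence lemma; that part is minor.)

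Moreover, the specific two-phase instantiation you sketch does not obviously close the gap. First, the only fixed-length guarantee available for BZ-type searching is \eqref{eqn:bz_error}, which controls the error only for rates below $\log(1/g(p))<1-H(p)$; running the communication phase with $\lceil\log(i+1)/R\rceil$ queries at $R$ near $C=1-H(p)$ and claiming error at most a constant $\eta<1$ (uniformly in $i$, including small $i$) is unsupported, and you cannot substitute a generic capacity-achieving channel code because in noisy searching the ``encoder'' is constrained to threshold comparisons against the sorted elements. Second, a binary confirmation test can only detect that $\hat l_0$ is wrong; without a retransmission/fallback rule the final error cannot be pushed below the communication-phase error $\eta$, and with retransmissions the expected length acquires a factor of order $\frac{1}{1-\eta}$, so achieving every rate below $\tfrac12(1-H(p))$ forces $\eta\to 0$ at rates approaching $C$ --- which is again the unproven core. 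The paper avoids both issues by dispensing with the phase decomposition altogether: a single variable-length posterior-update scheme with stopping rule $\max_i Z_k(i)\ge -\log P_e$, whose expected stopping time is controlled directly by the drift argument in the appendix. To repair your proposal you would either have to prove Theorem~\ref{thm:var-BZ-length} (or an equivalent uniform expected-length bound for your two-phase searcher) or cite a searching-specific result delivering it; as written, the proposal reduces the theorem to an assumption.
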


\begin{theorem}[Achievability of fixed-length noisy sorting codes] \label{thm:achievability}
Any sorting rate
\[
R_\mathrm{f} < \frac{1}{2}\log(1/g(p)),
\]
where $g(p) \triangleq \frac{1}{2}+\sqrt{p(1-p)}$, is achievable for the noisy sorting problem using the fixed-length noisy sorting algorithm presented in Section~\ref{sec:proof-achievability}. In other words, $C_\mathrm{f}(p) \geq \frac{1}{2}\log(1/g(p))$.
\end{theorem}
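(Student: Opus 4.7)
The plan is to instantiate the insertion-based framework of Algorithm~\ref{alg:ins} with the Burnashev--Zigangirov (BZ) scheme as the \texttt{SEARCH} subroutine, using a number of queries per insertion step that depends only on $n$ (and not on the query responses), so that the resulting noisy sorting code is automatically fixed-length. The key observation is that inserting a new element into a sorted list of length $i$ is exactly the problem of transmitting one of $i+1$ equally likely messages over a $\mathrm{BSC}(p)$ with noiseless feedback: the unknown position plays the role of the message, and each pairwise comparison is one noisy channel use with causal feedback. In this correspondence the BZ algorithm maintains a posterior over candidate positions and, at each round, asks the comparison whose ``yes''/``no'' split of the current posterior is as close to $1/2$-$1/2$ as possible.

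The main technical ingredient is a fixed-length error bound for BZ: when one of $N$ equally likely messages is transmitted over $\mathrm{BSC}(p)$ with feedback using $k$ BZ queries, the MAP decoder errs with probability at most
\[
P_e(N,k) \;\leq\; c\, N\, g(p)^{k},
\]
for a universal constant $c$, where $g(p)=\tfrac12+\sqrt{p(1-p)}$. I would derive this by tracking the Bhattacharyya-type potential $\Phi(\pi)=\sum_j \sqrt{\pi_j(1-\pi_j)}$ of the posterior $\pi$: the BZ splitting rule is designed so that $\E[\Phi(\pi_{t+1})\mid \pi_t]\leq g(p)\,\Phi(\pi_t)$ at each step, which combined with $\Phi(\pi_0)=O(\sqrt{N})$ for a uniform prior and Markov's inequality yields the displayed bound. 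This is essentially the classical BZ analysis specialized to the noisy searching setting, and it is the main obstacle of the proof.

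Given this bound, I would pick $\delta>0$ and allocate
\[
k \;=\; \Bigl\lceil \tfrac{(2+\delta)\,\log n}{\log(1/g(p))} \Bigr\rceil
\]
queries to each of the $n-1$ insertion steps. Since the $i$th search has $N_i \leq n$ candidate positions, the BZ bound gives a per-step error at most $c\,n\cdot g(p)^{k}\leq c\,n^{-1-\delta}$. By Remark~\ref{rem:reduction} we may assume the true permutation is uniform, which ensures that, conditional on all previous insertions being correct, the position to be found at step $i$ is uniform over $\{1,\ldots,N_i\}$, so the hypothesis of the BZ bound is met. A union bound over the $n-1$ insertions gives
\[
\max_{\pi} \P(\hat{\Pi}\neq \pi) \;\leq\; (n-1)\,c\,n^{-1-\delta} \;=\; O(n^{-\delta}) \;\to\; 0.
\]

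Finally, the total (deterministic) number of queries is $M = (n-1)k = (1+o(1))\,\tfrac{(2+\delta)\,n\log n}{\log(1/g(p))}$, so the sorting rate satisfies $R_{\mathrm f} = n\log n/M \to \log(1/g(p))/(2+\delta)$. Since $\delta>0$ is arbitrary, every $R_{\mathrm f} < \tfrac12\log(1/g(p))$ is fixed-length-achievable, which is the claim of Theorem~\ref{thm:achievability}. The factor $1/2$ is not slack in the argument: it reflects the union-bound requirement that each of the $n$ noisy searches over $\Theta(n)$ positions must succeed with probability $1-o(1/n)$, which costs roughly twice the information-theoretic minimum of $\log n/\log(1/g(p))$ queries per search.
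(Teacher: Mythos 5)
Your proposal follows essentially the same route as the paper's proof: insertion sort with the fixed-length BZ algorithm as the search subroutine, an equal per-insertion budget of roughly $2\log n/\log(1/g(p))$ queries, the per-search error bound of order $N\,g(p)^{k}$, and a union bound over the $n-1$ insertions, which is exactly how the paper obtains the rate $\tfrac12\log(1/g(p))$ (the paper does not even need the uniform-prior reduction here, since it applies the BZ guarantee~(\ref{eqn:bz_error}) directly to each insertion). The one caveat is your sketched re-derivation of that guarantee: the paper simply cites it from~\cite{Burnashev1974}, and applying Markov's inequality to $\Phi(\pi)=\sum_j\sqrt{\pi_j(1-\pi_j)}$ as stated would not work, because $\Phi$ has no constant lower bound on the error event; the classical argument instead tracks a quantity tied to the true position, e.g.\ $\E\bigl[\sqrt{(1-q_k(l^*))/q_k(l^*)}\bigr]$, which is at least $1$ whenever the MAP estimate errs---but since the bound itself is standard, this does not affect your conclusion.
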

Theorems~\ref{thm:vl-achievability} and \ref{thm:achievability} will be proved in Section~\ref{sec:achievability}, by giving specific constructions of noisy sorting codes. Both constructions that will be discussed are based on algorithms designed for the \emph{noisy searching problem}, the problem of locating an element within a sorted sequence when noise is present. Before we describe the code constructions in Section~\ref{sec:achievability}, we briefly introduce the noisy searching problem in Section~\ref{sec:noisy-search}, while setting up the necessary notation. 
\begin{theorem}[Converse] \label{thm:vl-converse}
The noisy sorting capacity $C(p)$ is upper bounded as
$$C(p)\leq 1-H(p).$$
\end{theorem}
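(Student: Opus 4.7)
The plan is to view the noisy sorting problem as channel coding over BSC($p$) with perfect causal feedback, with the random permutation $\Pi$ playing the role of the message and the noisy queries playing the role of channel uses. Shannon's converse for DMCs with feedback tells us that the achievable rate cannot exceed the capacity $1-H(p)$, and the goal is to adapt this argument to the variable-length (random stopping time) setting considered here.

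By Remark~\ref{rem:reduction}, I may assume without loss of generality that $\Pi\sim\mathrm{Unif}(\mathcal{S}_n)$, so that $H(\Pi)=\log(n!)$. The central per-query bound comes from observing that $(U_k,V_k)$ is a deterministic function of the past $(Y^{k-1},U^{k-1},V^{k-1})$, and that $Y_k=\mathbbm{1}\{U_k<V_k\}\oplus Z_k$ with $Z_k\sim\mathrm{Bern}(p)$ independent of $(\Pi,Y^{k-1},U^{k-1},V^{k-1})$. These two facts together give
\[
I(\Pi; Y_k,U_k,V_k \cond Y^{k-1},U^{k-1},V^{k-1}) \;=\; I(\Pi; Y_k \cond Y^{k-1},U^k,V^k) \;\leq\; H(Y_k)-H(p) \;\leq\; 1-H(p).
\]
Chaining via the mutual-information chain rule immediately yields $I(\Pi; Y^N,U^N,V^N)\leq N(1-H(p))$ for any deterministic length $N$.

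The main obstacle is extending this deterministic-length bound to the random stopping time $M$. My plan is to invoke the standard variable-length feedback converse (originating in Burnashev's work on feedback coding, and refined by Polyanskiy--Poor--Verd\'u), which after accounting for the entropy of the stopping time gives
\[
I(\Pi; Y^M,U^M,V^M,M) \;\leq\; \E[M]\,(1-H(p)) + H(M).
\]
An equivalent self-contained route is to first truncate $M$ at a deterministic level $N_0\gg\E[M]$, apply the deterministic-length bound to the truncated process, and then control the tail via Markov's inequality combined with continuity of mutual information. In either case, since $M$ is a nonnegative integer-valued random variable with mean $(n\log n)/R$, one has $H(M)=O(\log\E[M])=o(n\log n)$.

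Combining the data processing inequality $I(\Pi;\hat\Pi)\leq I(\Pi; Y^M,U^M,V^M,M)$ with Fano's inequality $H(\Pi\cond\hat\Pi)\leq 1+P_e\log(n!)$ finally gives
\[
\log(n!) \;=\; H(\Pi) \;\leq\; \E[M]\,(1-H(p)) + H(M) + 1 + P_e\log(n!).
\]
Dividing by $n\log n$, substituting $\log(n!)=n\log n\,(1+o(1))$, $\E[M]=(n\log n)/R$, $H(M)=o(n\log n)$, and letting $P_e\to 0$ as $n\to\infty$, this reduces to $1\leq(1-H(p))/R$, i.e., $R\leq 1-H(p)$, yielding $C(p)\leq 1-H(p)$ as claimed.
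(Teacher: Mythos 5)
Your proposal is correct in substance, but it takes a different route from the paper. The paper proves Theorem~\ref{thm:vl-converse} by an explicit reduction: it shows that any variable-length noisy sorting code induces a transmission scheme for the BSC$(p)$ with noiseless feedback and message set of size $n!$ (the receiver feeds back the next query pair, the transmitter sends the bit $\mathbbm{1}_{\{\pi^*(k_1)<\pi^*(k_2)\}}$), and then quotes Burnashev's converse (Theorem~\ref{thm:burn_conv}) to lower bound $\E[M]$; this buys a sharper, non-asymptotic bound that also carries the $D(p\|1-p)$ second-order term, which the paper reuses for Theorem~\ref{thm:insert-sort-conv}. You instead formalize the intuitive argument directly: uniform prior via Remark~\ref{rem:reduction}, the per-query bound $I(\Pi;Y_k\cond Y^{k-1},U^k,V^k)\le 1-H(p)$ (valid since $(U_k,V_k)$ is a function of the past and $Z_k$ is independent of $(\Pi,Y^{k-1},U^{k-1},V^{k-1})$), the chain rule, data processing, and Fano, with the harmless $H(M)=O(\log\E[M])=o(n\log n)$ slack; this is cleaner and self-contained at first order, but the entire difficulty of the variable-length setting is concentrated in the single inequality $I(\Pi;Y^M,U^M,V^M,M)\le \E[M](1-H(p))+H(M)$, which you justify by citing the standard stopping-time converse (Burnashev/Polyanskiy--Poor--Verd\'u), i.e., the same martingale/optional-stopping machinery the paper imports through Theorem~\ref{thm:burn_conv}. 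One caveat: your proposed ``self-contained'' alternative via truncation at $N_0\gg\E[M]$ plus Markov's inequality does not work as sketched, because the deterministic-length bound applied to the truncated process scales with $N_0$ rather than $\E[M]$ (choosing $N_0=\E[M]/\delta$ inflates the bound by $1/\delta$), so you genuinely need the optional-stopping lemma for the stopped information process (or an explicit reduction as in the paper); with that lemma cited, your argument is a valid, if slightly less quantitative, proof of $C(p)\le 1-H(p)$.
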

\begin{theorem}[Converse for insertion-based noisy sorting codes]\label{thm:insert-sort-conv}
The insertion-based noisy sorting capacity $C_\mathrm{ins}(p)$ is upper bounded as
$$C_{\mathrm{ins}}(p)\le \frac{1}{\frac{1}{1-H(p)}+\frac{1}{D(p\|1-p)}}.$$
\end{theorem}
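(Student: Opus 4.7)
My plan is to decompose any insertion-based noisy sorting code into its $n-1$ constituent noisy searches and apply Burnashev's converse for variable-length feedback coding to each one individually. At iteration $i$ of Algorithm~\ref{alg:ins}, the algorithm must locate $\theta_{n-i}$ among $i+1$ candidate positions in the current sorted list. By Remark~\ref{rem:reduction} we may take $\pi$ to be uniform, and then, conditioned on the event $\mathcal{E}_{i-1}$ that all previous insertions were correct, the true position $L_i\in\{1,\ldots,i+1\}$ is uniformly distributed. Since each query is an adaptively chosen comparison whose binary answer is passed through a BSC$(p)$ with full feedback at the querier, the search at iteration $i$ is equivalent to a variable-length feedback code that transmits a uniform $(i+1)$-ary message over BSC$(p)$, for which the Shannon capacity is $C=1-H(p)$ and Burnashev's reliability constant is $C_1=D(p\|1-p)$.

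Let $\delta_i \triangleq \P(\text{iteration } i \text{ is incorrect}\mid \mathcal{E}_{i-1})$ and $N_i \triangleq \E[M_i\mid\mathcal{E}_{i-1}]$, where $M_i$ is the (possibly random) number of queries used at iteration $i$. Since insertion sort outputs the correct permutation if and only if every iteration inserts correctly, the chain rule gives $\P(\hat{\Pi}=\pi) = \prod_{i=1}^{n-1}(1-\delta_i)$, so a vanishing worst-case error probability forces $\sum_{i=1}^{n-1}\delta_i \to 0$. Applying Burnashev's converse bound to each iteration yields
$$N_i \;\geq\; \frac{\log(i+1)}{1-H(p)} + \frac{\log(1/\delta_i)}{D(p\|1-p)} - o(N_i),$$
where the $o(N_i)$ term absorbs both the $H(\delta_i)$ Fano correction on the capacity term and the sub-linear slack inside Burnashev's reliability function. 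Since $\P(\mathcal{E}_{i-1}) \geq 1 - \sum_{j<i}\delta_j \to 1$, we have $\E[M_i] \geq (1-o(1))\,N_i$, and summing over $i$ produces
$$\E[M] \;\geq\; (1-o(1))\left[\frac{\log(n!)}{1-H(p)} + \frac{1}{D(p\|1-p)}\sum_{i=1}^{n-1}\log(1/\delta_i)\right] - o(n\log n).$$

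To eliminate the $\delta_i$'s, I apply Jensen's inequality to the convex function $x\mapsto -\log x$:
$$\sum_{i=1}^{n-1}\log(1/\delta_i) \;\geq\; (n-1)\log\!\left(\frac{n-1}{\sum_{i=1}^{n-1}\delta_i}\right) \;\geq\; (n-1)\log(n-1),$$
where the last inequality uses $\sum_i \delta_i \leq 1$ for all $n$ sufficiently large. Combining this with $\log(n!) = n\log n (1+o(1))$ gives
$$\E[M] \;\geq\; n\log n\left(\frac{1}{1-H(p)}+\frac{1}{D(p\|1-p)}\right)(1+o(1)),$$
so that $R=\frac{n\log n}{\E[M]}$ cannot exceed the expression in the theorem statement, establishing the claim. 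The main technical obstacle is making the equivalence between the $i$th noisy search and a variable-length feedback code formally precise, and then tracking the lower-order terms in Burnashev's bound carefully enough that, after summing over $n-1$ iterations, the accumulated slack remains $o(n\log n)$; this matters because the two leading contributions $\frac{n\log n}{1-H(p)}$ and $\frac{n\log n}{D(p\|1-p)}$ are of the same order and both must survive intact.
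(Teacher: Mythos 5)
Your proof is correct and follows essentially the same route as the paper: decompose the insertion-based code into its $n-1$ noisy searches, reduce each search (with uniform conditional prior, via the Lehmer-code independence) to a variable-length feedback code over the BSC$(p)$, invoke Burnashev's converse (Theorem~\ref{thm:burn_conv}, formalized for searching in the paper as Theorem~\ref{thm:noisy-search-conv} through exactly the simulation argument you flag as the ``main technical obstacle''), and sum over iterations. The one genuine difference is how the per-iteration error probabilities $\delta_i$ are eliminated: the paper argues that vanishing overall error forces all but $w=o(n)$ of the conditional error probabilities to be $o(1/n)$, and then sums the bound $\E[M_{j_l}]>\frac{\log(j_l+1)}{1-H(p)}+\frac{\log n}{D(p\|1-p)}+O(\log\log n)$ only over those $n-w$ good indices; you instead use $\sum_i\delta_i\to 0$ together with Jensen's inequality to get $\sum_i\log(1/\delta_i)\ge (n-1)\log(n-1)$ directly, which is a slightly cleaner step and also retains the full $\log(n!)$ capacity term rather than $\log(n-w+1)!$. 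Both yield the same leading-order bound $\E[M]\ge n\log n\bigl(\frac{1}{1-H(p)}+\frac{1}{D(p\|1-p)}\bigr)(1+o(1))$; one small caution is that the accumulated Burnashev slack need not be $o(n\log n)$ in absolute terms when some $\delta_i$ are extremely small (the $\log\log$ term can blow up), so it must be absorbed multiplicatively against the main term, which your $(1-o(1))$ factor does, and your explicit handling of $\E[M_i]\ge(1-o(1))\E[M_i\mid\mathcal{E}_{i-1}]$ is in fact more careful than the paper's unconditional statement.
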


Intuitively, the result of Theorem~\ref{thm:vl-converse} is expected, since every query can give at most $1-H(p)$ bits of information about the unknown permutation $\pi$. Thus, any sequence of $(R,n)$ noisy sorting codes with vanishing error probability should have
\[
\log (n!) \le \E[M](1-H(p)),
\]
which directly implies the statement of Theorem~\ref{thm:vl-converse}. The proof of both converse theorems are based on a converse for the noisy searching problem described in section~\ref{sec:noisy-search}. We defer the formal proofs of Theorems~\ref{thm:vl-converse} and~\ref{thm:insert-sort-conv} to Section~\ref{sec:proof-conv}.

\subsection{Detour: Noisy Searching Problem} \label{sec:noisy-search}

In the noisy searching problem, an agent aims to identify the position of a given real number $\tilde{\theta}$ within a sorted sequence $(\tilde{\theta}_0,\tilde{\theta}_1,\ldots,\tilde{\theta}_n)$, where $\tilde{\theta}_0 = -\infty$ and $\tilde{\theta}_n = \infty$\footnote{Throughout the paper, we use the ``tilde'' mark over variables that are associated with the noisy searching problem.}. That is, the goal of the agent is to identify the index $l^{*}$ such that
\[
\tilde{\theta}_{l^{*}-1} < \tilde{\theta} < \tilde{\theta}_{l^{*}}.
\]
To aid in the exposition, denote by $I_i$ the $i$th sub-interval among the sorted sequence, i.e., $I_i = (\tilde{\theta}_{i-1},\tilde{\theta}_i]$ for each $i\in [n-1]$, and $I_n = (\tilde{\theta}_{n-1},\tilde{\theta}_n)$.
At the $k$th time step, the agent gives the system a query $\td{U}_k\triangleq \td{\theta}_i$ for some $i\in [n-1]$, and receives a noisy response $\td{Y}_k$ of whether $\td{U}_k$ lies to the left or right of $\tilde{\theta}$. The response $\td{Y}_k$ can be expressed as
\begin{equation} \label{eqn:noisy-search}
\td{Y}_k = \mathbbm{1}_{\{\td{U}_k < \tilde{\theta}\}} \oplus \td{Z}_k,
\end{equation}
where $\td{Z}_k \sim \mathrm{Bern}(p)$ is a sequence of i.i.d. random variables.

Each variable-length noisy searching algorithm consists of an adaptive \emph{querying strategy} $\{\td{f}_k\}_{k=1}^\infty$, which is a causal protocol for determining the queries
\[
    \td{U}_k = \td{f}_k(\td{Y}^{k-1}, \td{U}^{k-1}).
\]
The adaptive querying strategy is associated with a stopping time $\td{M}$. Its output is an estimate $\hat{L}=\hat{l}(\td{Y}^{\td{M}},\td{U}^{\td{M}})$ for the location $l^*$. Notice that the random stopping time $\td{M}$ represents the number of the queries made by the algorithm. A desirable goal of designing noisy searching algorithm is to minimize the expected number of queries $\E[\td{M}]$ while keeping the error probability $\P(\hat{L}\neq l^*)$ small.

The noisy searching problem was first introduced by R\'enyi~\cite{Renyi1961} and Ulam~\cite{Ulam1976} in the context of a ``two-player 20-question game with lies'', and further developed by Berlekamp~\cite{Berlekamp1964} and Horstein~\cite{Horstein1963} in the context of coding for channels with noiseless feedback. The resemblance to the coding problem holds by viewing the position of $\theta$ as a message that needs to be communicated over a binary symmetric channel with crossover probability $p$, and the number of queries submitted by the agent as the number of channel uses. The availability of noiseless feedback pertains to the fact that the agent can use the previous received responses to adapt the querying strategy. If the agent is allowed to query any point on the real line, then Horstein's coding scheme~\cite{Horstein1963} suggests a querying strategy for the noisy searching problem: roughly speaking, query the point that equally bisects the posterior of the target point $\theta$ given the responses received so far. In fact, Shayevitz and Feder have shown in~\cite{Shayevitz2011} the optimality of Horstein's coding scheme for the channel coding problem. 
We discuss the detailed connection between the noisy searching problem and the channel coding with feedback problem in Section~\ref{sec:proof-conv}.


The noisy searching problem has been also studied in settings where the agent is restricted to query one of the elements $\theta_i$ of the sorted sequence. It is known in the literature that an expected $\frac{\log n}{1-H(p)}(1+o(1))$ queries are both necessary and sufficient to achieve vanishing error probability~\cite{Burnashev1974,Burn1976,Ben-Or2008,Javidi2015}. In~\cite{Ben-Or2008,Javidi2015,Burnashev1974}, variable-length algorithms achieving vanishing error probability with the above optimal expected number of queries are proposed, while~\cite{Burnashev1974} also provides fixed-length construction for the algorithm. In the following, we will describe the both the variable-length and fixed-length noisy searching algorithms of~\cite{Burnashev1974}, show how they can be used to construct noisy sorting codes for the sorting problem, and derive their achievable rates.

\section{Proof of Achievability results}\label{sec:achievability}
In this section, we describe our achievability results for noisy sorting. First, we prove Theorem~\ref{thm:achievability} by proposing a fixed-length noisy sorting code and analyze its achievable rate. Then, we prove Theorem~\ref{thm:vl-achievability} by modifying the fixed-length code into a variable-length version and analyze the achievable rate.
Moreover, we obtain an upper bound for the rate of the best-known algorithm in the unknown $p$ setting proposed in~\cite{ren2019sample} through a refined analysis. We show that the achievable rates by the proposed noisy sorting codes are strictly higher than that of the existing noisy sorting code in~\cite{ren2019sample}.
\subsection{Achievability: Fixed-length Noisy Sorting Code}
\label{sec:proof-achievability}
In this section, we prove Theorem~\ref{thm:achievability} by describing a fixed-length noisy sorting code that can achieve any sorting rate $R < \frac{1}{2}\log(1/g(p))$. A key ingredient of the proposed sorting code is the Burnashev--Zigangirov (BZ) algorithm for the noisy searching problem~\cite{Burnashev1974}, which we delineate below.

\vspace{.5em}
\emph{\textbf{Burnashev--Zigangirov Algorithm for Noisy Searching:}}
Recall the noisy searching problem outlined in Section~\ref{sec:noisy-search}. Unlike Horstein's scheme which queries the median of the posterior distribution of the target point $\theta$, the BZ algorithm queries one of the two endpoints of the sub-interval containing the median. With this slight modification, Horstein's scheme can be generalized for searching over a finite set (i.e., the set of sub-intervals) with a tractable analysis of the error probability.


To this end, for each $k\in[0,\ldots,\tilde{m}]$, we define a pseudo-posterior distribution $q_k(\cdot)$, where $q_k(i)$ denotes a confidence score of $\tilde{\theta}$ belonging to interval $I_i$ for each $i$ given the previous $k$ queries $\td{Y}_1,\ldots,\td{Y}_{k}$ and responses $\td{U}_1,\ldots,\td{U}_{k}$. As presented in later parts, $q_k(\cdot)$ is a probability distribution on the $n$ possible intervals for each fixed $k$, and the update of $q_k(\cdot)$ from $q_{k-1}(\cdot)$ follows the Bayes' Rule except that the crossover probability $p$ for each query is replaced by an input parameter $\alpha$ to the algorithm. The prior $q_0(\cdot)$ is initialized to the uniform distribution. There are two reasons for defining the pseudo-posteriors instead of setting $q_k(\cdot)$ as the true posterior probability $\P\{\td{\theta} \in I_i \cond \td{Y}_1,\ldots,\td{Y}_{k},\td{U}_1,\ldots,\td{U}_{k}\}$. Firstly, $p$ is unknown in our setting, so we cannot update the posteriors exactly following the Bayes' Rule. Secondly, the analysis of the fixed-length BZ algorithm~\cite{castro2008active} necessitates this update with parameter $\alpha>p$, while no performance guarantee is given when $\alpha=p$.

Let $j(k)$ be the index of the sub-interval containing the median of the $(k-1)$th pseudo-posterior distribution, i.e., $j(k)$ is the index s.t.
\begin{equation} \label{eqn:j(k)}
    \sum_{i=1}^{j(k)-1}q_{k-1}(i) \leq \frac{1}{2} \quad \text{and} \quad \sum_{i=1}^{j(k)}q_{k-1}(i) > \frac{1}{2}.
\end{equation}
The querying strategy of the BZ algorithm is to choose $\td{U}_{k}$ randomly among $\{\td{\theta}_{j(k)-1},\td{\theta}_{j(k)}\}$, as follows. Let
\begin{equation}\label{eqn:jstar}
j^{*}(k) = \begin{cases}
j(k)-1 &\:\:\: \text{with probability } p^{*}(k),\\
j(k) &\:\:\: \text{with probability } 1-p^{*}(k),
\end{cases}
\end{equation}
where
\[
p^{*}(k) = \frac{2\sum_{i=1}^{j(k)}q_{k-1}(i) - 1}{2q_{k-1}(j(k))},
\]
and set $\td{U}_{k} = \td{\theta}_{j^{*}(k)}$. The remaining part of the algorithm is to update the pseudo-posteriors $q_k(i)$ after receiving the noisy response $\td{Y}_{k} = \td{y}_{k}$
as
\begin{equation} \label{eqn:posterior}
    \begin{aligned}
    q_{k}(i) = \begin{cases}
    \frac{2(1-\alpha) q_{k-1}(i)}{1+(1-2\td{y}_{k})(1-2\alpha)\tau(k)} &\:\:\: \text{if } \td{y}_{k} = \mathbbm{1}_{\{j^{*}(k)\leq i\}}, \\[5pt]
    \frac{2\alpha q_{k-1}(i)}{1+(1-2\td{y}_{k})(1-2\alpha)\tau(k)} &\:\:\: \text{otherwise},
    \end{cases}
    \end{aligned}
\end{equation}
where
\[
\tau(k) = 2\sum_{i=1}^{j^{*}(k)}q_{k-1}(i) - 1.
\]
It is not hard to check that when $\alpha$ is set to $p$, this updates exactly follow the Bayes' Rule, and the pseudo-posterior $q_{k}(i)$ is exactly the posterior probability $\P\{\td{\theta} \in I_i \cond \td{Y}_1,\ldots,\td{Y}_{k},\td{U}_1,\ldots,\td{U}_{k}\}$.
After $\td{m}$ queries, the output of the BZ algorithm is the index $i$ of the sub-interval with the largest posterior $q_{\td{m}}(i)$.

A pseudocode of the BZ algorithm is given in Algorithm~\ref{alg:bz}. The following lemma bounds the error probability of BZ algorithm.

\begin{lemma}[Theorem 8.1 in~\cite{castro2008active}]
\label{thm:fl-bz}
    Suppose $p<\alpha<1/2$, then the error probability of the BZ algorithm satisfies
    \begin{equation}
    \label{eqn:bz_error}
         \P\{\td{\theta} \notin I_{\hat{l}}\} \leq (n-1)\left(\frac{1-p}{2(1-\alpha)}+\frac{p}{2\alpha}\right)^{\td{m}}.
    \end{equation}
\end{lemma}


\begin{algorithm}[t]
\caption{Fixed-length Burnashev--Zigangirov \texttt{BZ} Algorithm } \label{alg:bz}
\SetKwInOut{Input}{Input}
\SetKwInOut{Output}{Output}
\Input{Ordered sequence $(\td{\theta}_1,\ldots,\td{\theta}_{n-1})$, target $\td{\theta}$, number of queries $\td{m}$, parameter $\alpha$. }
\Output{Index $\hat{l}$ of the sub-interval containing $\td{\theta}$.}
\nl $q_0(i) \gets 1/n$ for each $i \in [n]$.\\
\For{$k=1,\ldots,\td{m}$} {
\nl     Find $j(k)$ using inequalities in (\ref{eqn:j(k)}).\\
\nl     Set $j^{*}(k)$ according to~\eqref{eqn:jstar}.\\
\nl     Set $\td{U}_k \gets \td{\theta}_{j^{*}(k)}$.\\
\nl     Receive noisy response $\td{Y}_{k}$.\\
\nl     Update $q_k(i)$ for each $i\in [n]$ as in (\ref{eqn:posterior}).
}
\nl $\hat{l} \gets \arg\max_i \, q_{\td{m}}(i)$.\\
\nl \Return{$\hat{l}$}
\end{algorithm}

\emph{\textbf{Proposed Fixed-length Noisy Sorting Code:}}
The proposed fixed-length noisy sorting code uses the well-known insertion sort along with the BZ algorithm. More specifically, for each $i \in \{1,\ldots,n-1\}$, the fixed-length BZ algorithm is used to \emph{insert} $\theta_{n-i}$ in the correct location within the already-sorted sequence of elements. That is,
$\theta_{n-1}$ is initially inserted in its estimated position with respect to $\theta_{n}$, 
and then $\theta_{n-2}$ is inserted in its estimated position with respect to now-known ordering of $\theta_{n-1}$ and $\theta_{n-2}$,
and so on. In the $i$th insertion, this corresponds to an application of the fixed-length BZ algorithm with a sorted sequence of length $i$. By taking derivative with respect to $\alpha$ for the term $\frac{1-p}{2(1-\alpha)}+\frac{p}{2\alpha}$ in~\eqref{eqn:bz_error}, we can see that the right-hand side of~\eqref{eqn:bz_error} takes minimum value $(n-1)(\tfrac12+\sqrt{p(1-p})^{\td{m}}$ when $\alpha$ is set to $\frac{\sqrt{p}}{\sqrt{p}+\sqrt{1-p}}$. Therefore, by setting $\alpha$ to the optimal value $\frac{\sqrt{p}}{\sqrt{p}+\sqrt{1-p}}$, we can minimize the overall error probability of the noisy sorting code. However, the noisy sorting code does not have the knowledge of the crossover probability $p$. Thus, the proposed noisy sorting code first computes an estimation, denoted $\hat{p}$, for the crossover probability through making $n$ queries on a same pair of elements.
More specifically, we compare the two elements $\theta_1$ and $\theta_2$ for $n$ times. Let $t$ denote the number of responses $\theta_1<\theta_2$ out of the $n$ queries. We set $\hp=\frac{\min(t,n-t)}{n}$ as an \emph{unbiased} estimation for $p$. Then we run each step of BZ algorithm with input parameter $\alpha=\frac{\sqrt{\hp}}{\sqrt{\hp}+\sqrt{1-\hp}}$. Algorithm~\ref{alg:proposed-sorting} shows the pseudocode of the proposed fixed-length noisy sorting code. Notably, this estimation process spends $n$ queries, while the later insertion steps spend in total $\Theta(n\log n)$ queries. Therefore, the number of queries spent on estimation does not affect the dominating term in the total number of queries of the algorithm. Moreover, we show that with high probability these $n$ queries provide a sufficiently accurate estimate $\hp$ for later insertion steps.


To prove Theorem~\ref{thm:achievability}, we set parameter $m$ in Algorithm~\ref{alg:proposed-sorting} to $\lceil\frac{n\log n}{R}\rceil$, where $R$ is a constant satisfying $R<\tfrac12\log(1/g(p))$. Notice that the total number of queries made by Algorithm~\ref{alg:proposed-sorting} is at most $\lceil\frac{n\log n}{R}\rceil+n$, and it follows that the rate of Algorithm~\ref{alg:proposed-sorting} is $R$. 
The proof of Theorem~\ref{thm:achievability} is then completed by the following lemma, which bounds the error probability of Algorithm~\ref{alg:proposed-sorting}.
\begin{lemma}
    \label{lem:fl-sorting-error}
    Suppose $R$ is a constant satisfying $R<\tfrac12\log(1/g(p))$. Then the error probability of Algorithm~\ref{alg:proposed-sorting} with input parameter $m=\lceil\frac{n\log n}{R}\rceil$ is at most $o(1)$.
\end{lemma}
We defer the proof of Lemma~\ref{lem:fl-sorting-error} to Appendix~\ref{app:pf-fl-sorting-error}.


\begin{algorithm}[t]
\caption{Fixed-length Noisy Sorting Code} \label{alg:proposed-sorting}
\DontPrintSemicolon
\SetKwInOut{Input}{Input}
\SetKwInOut{Output}{Output}
\Input{Unordered elements $\theta_1,\ldots,\theta_n$, number of queries for insertions $m$}
\Output{Permutation estimate $\hat{\pi}$.}
\nl Compare elements $\theta_1$ and $\theta_2$ for $n$ times. Let $t$ denote the number of responses $\theta_1<\theta_2$.\\
\nl $\hat{p}\gets \frac{\min(t,n-t)}{n}$.\\
\nl Initialize $\sigma$ as a $1\times 1$ array with element $n$.\\
\For{$i=1,\ldots, n-1$}{
\nl    $\hat{l} \gets \texttt{BZ}\left((\theta_{\sigma(1)},\ldots,\theta_{\sigma(i)}), \: \theta_{n-i}, \: \floor{\frac{m}{n-1}}, \: \frac{\sqrt{\hat{p}}}{\sqrt{\hat{p}}+\sqrt{1-\hat{p}}}\right)$.\\
\nl    $\sigma \gets \big( \sigma(1:\hat{l}-1), \:\, n-i, \:\, \sigma(\hat{l}:\text{end}) \big)$.
}
\nl $\hat{\pi} \gets \sigma$.\tcp*{Treat the $1\times n$ array $\sigma$ as a permutation}
\nl \Return{$\hat{\pi}$}
\end{algorithm}

\subsection{Achievability: Variable-length Noisy Sorting Code}
\label{sec:proof-vl-achievability}
In this section, we prove Theorem~\ref{thm:vl-achievability} by describing a noisy sorting code that can achieve any sorting rate $R<\frac12 (1-H(p))$. The code is similar to the one introduced in Section~\ref{sec:proof-achievability}. The major difference is that each noisy searching step is performed by a variable-length version of the BZ algorithm. We describe the difference between the variable-length BZ algorithm and the fixed-length version in the following section.

\vspace{.5em}
\emph{\textbf{Variable-length BZ algorithm:}}
The variable-length BZ algorithm is based on the same idea of the fixed-length BZ algorithm. There are three differences between these two versions. 
\begin{enumerate}
    \item At each iteration, the variable-length BZ algorithm chooses the index $j^*$ of the query point $\td{\theta}_{j^*}$ by a different principle than~\eqref{eqn:jstar}:
    \begin{equation}\label{eqn:jstar-vl}
    j^{*}(k) = \begin{cases}
    j(k)-1 &\: \text{with probability } \pi_1,\\
    j(k) &\: \text{with probability } \pi_2\triangleq1-\pi_1,
    \end{cases}
    \end{equation}
    where 
    \begin{equation}\label{eqn:pi1}
    \pi_1\triangleq\frac{\rho(\tau_2(k))-\rho(-\tau_2(k))}{\rho(\tau_1(k))-\rho(-\tau_1(k))+\rho(\tau_2(k))-\rho(-\tau_2(k))}
    \end{equation}
    with 
    \begin{equation}\label{eqn:rho}
    \rho(x)\triangleq(1-\hp)\log (1+(1-2\hp)x)+\hp\log (1-(1-2\hp)x)
    \end{equation}
    for $|x|\le 1$, and
    \begin{equation} \label{eqn:bias}
    \tau_1(k)\triangleq 2\sum_{i=j(k)}^{n}q_{k-1}(i) -1\quad  \text{and}\quad \tau_2(k)\triangleq 2\sum_{i=1}^{j(k)}q_{k-1}(i) -1.
\end{equation}
    \item The $\alpha$ value for pseudo-posterior update in~\eqref{eqn:posterior} is chosen to be the estimated crossover probability $\hp$.
    \item The variable-length algorithm does not post a preset $m$ number of queries. Instead, it takes a target error probability $P_e$ as input and runs for a random number $\td{M}$ of queries. For each $i\in [n]$ and $k\ge 0$, define 
    \begin{equation}\label{eq:Z_def}
        Z_k(i)\triangleq \log\frac{q_k(i)}{1-q_k(i)}.
    \end{equation}
    The algorithms terminates at $\td{M}$th iteration, where $\td{M}$ is the smallest integer such that $Z_{\td{M}}(i)\ge-\log(P_e)$ for some $i\in[n]$.
\end{enumerate}
A pseudocode of the variable-length BZ algorithm is given in Algorithm~\ref{alg:vl-bz}.
\begin{algorithm}[t]
\caption{Variable-length Burnashev--Zigangirov \texttt{VLBZ} Algorithm } \label{alg:vl-bz}
\SetKwInOut{Input}{Input}
\SetKwInOut{Output}{Output}
\Input{Ordered sequence $(\td{\theta}_1,\ldots,\td{\theta}_{n-1})$, target $\td{\theta}$, tolerated error probability $P_e$, estimated crossover probability $\hp$. }
\Output{Index $\hat{l}$ of the sub-interval containing $\td{\theta}$.}
\nl $k\gets 0$.\\
\nl $q_0(i) \gets 1/n$ for each $i \in [n]$.\\
\nl Calculate $Z_0(i)$ for each $i \in [n]$ as in~\eqref{eq:Z_def}.\\
\While{$\max_{i}Z_k(i)<-\log(P_e)$} {
\nl $k\gets k+1$\\
\nl     Find $j(k)$ using inequalities in (\ref{eqn:j(k)}).\\
\nl     Set $j^{*}(k)$ according to~\eqref{eqn:jstar-vl}.\\
\nl     Set $\td{U}_k \gets \td{\theta}_{j^{*}(k)}$.\\
\nl     Receive noisy response $\td{Y}_{k}$.\\
\nl     Update $q_k(i)$ for each $i\in [n]$ as in (\ref{eqn:posterior}).\\
\nl     Update $Z_k(i)$ for each $i\in [n]$ as in (\ref{eq:Z_def}).
}
\nl $\hat{l} \gets \arg\max_i \, q_{k}(i)$.\\
\nl \Return{$\hat{l}$}
\end{algorithm}
For variable-length noisy searching algorithms, we assume the true position $l^*$ follows a uniform prior. For the clarity of notation, we use random variable $L^*$ to denote uniformly distributed true position. 

In the following two lemmas, we bound the error probability and the expected number of queries of the variable-length BZ algorithm respectively. 
\begin{lemma}
\label{thm:error-vlbz}
    Suppose $p\le\hp<1/2$. The error probability of the variable-length BZ algorithm is at most $P_e$.
\end{lemma}
\begin{remark}
   If $\hp=p$, the output of variable-length BZ algorithm exactly follows the MAP estimation rule, and it follows that the error probability is at most $P_e$. In Lemma~\ref{thm:error-vlbz}, we generalize this result to the case of $\hp\ge p$. The detailed proof is deferred to Appendix~\ref{app:pf-error-vlbz}.
\end{remark}

\begin{lemma}\label{thm:var-BZ-length}
Let $\gamma=\hp-p$. Suppose $\gamma=o(1)$.
Then the number of queries $\td{M}$ in the variable-length BZ noisy searching scheme satisfies that
\begin{equation}
    \label{eq:var-BZ-length-cond}
    \mathsf{E}[\td{M}\cond L^*=i]\le\frac{\log n-\log P_e+\log\frac{1-\hp}{\hp}}{1-H(p)}(1+O(\gamma))
\end{equation}
for any $i\in[n]$, which further implies that the overall expected number of queries $\mathsf{E}[\td{M}]$ satisfies the same inequality
\begin{equation}
    \label{eq:var-BZ-length-overall}
    \mathsf{E}[\td{M}]\le\frac{\log n-\log P_e+\log\frac{1-\hp}{\hp}}{1-H(p)}(1+O(\gamma)).
\end{equation}
\end{lemma}
\begin{remark}
    An upper bound for the expected number of queries of the variable-length BZ algorithm is first established in~\cite{Burnashev1974}. It is shown that $\mathsf{E}[\td{M}]\le\frac{\log n-\log P_e+\log\frac{1-p}{p}}{1-H(p)}$ in the special case of $\hp=p$. In Lemma~\ref{thm:var-BZ-length}, we extend this result to allow a deviation of $\gamma=o(1)$ from $\hp$ to $p$, and show that a similar upper bound still holds with an additional multiplicative factor of $1+O(\gamma)$. In the original proof of~\cite{Burnashev1974}, a key step is to provide a lower bound for the \emph{expected improvement} of the posterior probability of the correct interval. In our proof of Lemma~\ref{thm:var-BZ-length}, we show that a similar bound with an additional multiplicative factor of $1+O(\gamma)$ holds for the expected improvement of the \emph{pseudo-posterior} of the correct interval under the assumption that $\gamma=\hp-p=o(1)$. This leads to the upper bound for $\E[\td{M}]$ with the same additional multiplicative factor. We defer the detailed proof of Lemma~\ref{thm:var-BZ-length} to Appendix~\ref{app:pf-var-BZ-length}.
\end{remark}

\vspace{.5em}
\emph{\textbf{Proposed Variable-length Noisy Sorting code:}}
The proposed variable-length noisy sorting code calls the variable-length BZ algorithm at each insertion step. Similarly to Algorithm~\ref{alg:proposed-sorting}, the algorithm first spends $n$ queries to estimate the flipping probability $\hp$. However, in this algorithm, we apply a \emph{biased} estimation $\hp=\frac{\min(t,n-t)}{n\left(1-\sqrt{\tfrac{\log n}{n}}\right)}$ instead of the \emph{unbiased} estimation $\hp=\frac{\min(t,n-t)}{n}$ in Algorithm~\ref{alg:proposed-sorting}. This is because when we call the variable-length BZ algorithm, the input parameter is set to $\hp$, and the bound for the error probability of the variable-length BZ algorithm presented in Lemma~\ref{thm:error-vlbz} requires the assumption that $\hp\ge p$. Setting the estimator in this biased manner ensures that $\hp\ge p$ with high probability. The pseudocode is described in Algorithm~\ref{alg:proposed-vlsorting}.
\begin{algorithm}[htbp]
\caption{Variable-Length Noisy Sorting Code} \label{alg:proposed-vlsorting}
\DontPrintSemicolon
\SetKwInOut{Input}{Input}
\SetKwInOut{Output}{Output}
\Input{Unordered elements $\theta_1,\ldots,\theta_n$, tolerated error probability $P_e$}
\Output{Permutation estimate $\hat{\pi}$.}
\nl Compare elements $\theta_1$ and $\theta_2$ for $n$ times. Let $t$ denote the number of responses $\theta_1<\theta_2$.\\
\nl $\hat{p}\gets \frac{\min(t,n-t)}{n\left(1-\sqrt{\tfrac{\log n}{n}}\right)}$.\\
\nl Initialize $\sigma$ as a $1\times 1$ array with element $n$.\\
\nl  \For{$i=1,\ldots, n-1$}{
\nl \If{\# comparisons made so far  $\geq n(\log n)^2$}{
\nl \textbf{break}
}
\nl    $\hat{l} \gets \texttt{VLBZ}\big((\theta_{\sigma(1)},\ldots,\theta_{\sigma(i)}), \: \theta_{n-i}, \: \frac{P_e}{n}, \: \hp\big)$.\\
\nl    $\sigma \gets \big( \sigma(1:\hat{l}-1), \:\, n-i, \:\, \sigma(\hat{l}:\text{end}) \big)$.
}
\nl $\hat{\pi} \gets \sigma$.\tcp*{Treat the $1\times n$ array $\sigma$ as a permutation}
\nl \Return{$\hat{\pi}$}
\end{algorithm}

To prove Theorem~\ref{thm:vl-achievability}, we the input parameter $P_e$ of Algorithm~\ref{alg:proposed-vlsorting} to $\frac{1}{\log n}$, and bound the expected number of queries and the error probability in the following two lemmas respectively. The proof of Theorem~\ref{thm:vl-achievability} follows readily from the two lemmas. The proofs of the two lemmas are presented in Appendices~\ref{app:pf-vl-sorting-queries} and~\ref{app:pf-vl-sorting-error} respectively.
\begin{lemma}\label{lem:vl-sorting-queries}
    Suppose the input parameter $P_e$ of Algorithm~\ref{alg:proposed-vlsorting} is set to $\frac{1}{\log n}$. Then the number of queries $M$ spent by the algorithm satisfies
    \begin{equation*}
        \lim_{n\to\infty}\frac{n\log n}{\E[M]}\ge \frac{1-H(p)}{2}.
    \end{equation*}
\end{lemma}
\begin{lemma}\label{lem:vl-sorting-error}
    Suppose the input parameter $P_e$ of Algorithm~\ref{alg:proposed-vlsorting} is set to $\frac{1}{\log n}$. Then the error probability of the algorithm is at most $o(1)$.
\end{lemma}

\subsection{Comparison with Existing Algorithms}
\label{sec:knownAlg}
In this section, we compare the rates of the proposed algorithms to that of~\cite{ren2019sample}, which has the best known sorting rate when the noise probability $p$ is unknown. Note that the algorithm in~\cite{ren2019sample} represents an extension of the algorithm in~\cite{Feige1994} to accommodate scenarios involving unknown and query-dependent noise levels. However, when we narrow our focus to the case of a fixed $p$, the sorting rate of the algorithm in~\cite{ren2019sample} consistently remains upper-bounded by the sorting rate of the algorithm in~\cite{Feige1994}. Therefore, in this section, we shift our attention to the algorithm proposed in~\cite{Feige1994}, and obtain its sorting rate through a refined analysis. This rate subsequently serves as an upper bound for the sorting rate of the algorithm in \cite{ren2019sample}. It is shown (see Fig. ~\ref{fig:comparison}) that the two proposed noisy sorting codes described in Section~\ref{sec:proof-achievability} and Section~\ref{sec:proof-vl-achievability} both strictly improve the sorting rate of the algorithm in~\cite{ren2019sample} for all $p \in (0,1/2)$.

\vspace{.5em}
\emph{\textbf{Noisy Searching using Comparison Trees:}}
Consider the noisy searching problem outlined in Section~\ref{sec:noisy-search}.  
The algorithm in~\cite{Feige1994} defines a binary tree $T$, where each node in the tree is labeled with a pair of indices $(i,j)$ representing interval $(\td{\theta}_i,\td{\theta}_j]$, for $0\leq i<j\leq n$. 
The root of the tree is labeled with the index pair $(0,n)$. The rest of the tree is defined recursively as follows: For each node with a label $(i,j)$ such that $j-i\ge 2$, two children are generated such that the left child has the label $(i,\ceil{\frac{i+j}{2}})$, and the right child has the label $(\ceil{\frac{i+j}{2}},j)$. By the above process, we build a binary tree $T$ that contains $n$ leaves, each representing one of the possible intervals in $\{I_i\}$ (recall the definition of sub-intervals $I_i = (\td{\theta}_{i-1},\td{\theta}_i]$ for $i\in [n-1]$, and $I_n = (\td{\theta}_{n-1},\td{\theta}_n)$). To allow searching with noisy queries, the algorithm defines an extended tree $T^{*}$ by attaching a length-$s$ chain of nodes to each leaf node in $T$ having the same label as the leaf node. Fig.~\ref{fig:NST} shows an illustration of the trees $T$ and $T^{*}$ for the case when $n=6$ and $s = 3$. 

\begin{figure}[tbp]
    \centering
    \def\svgscale{0.77}
    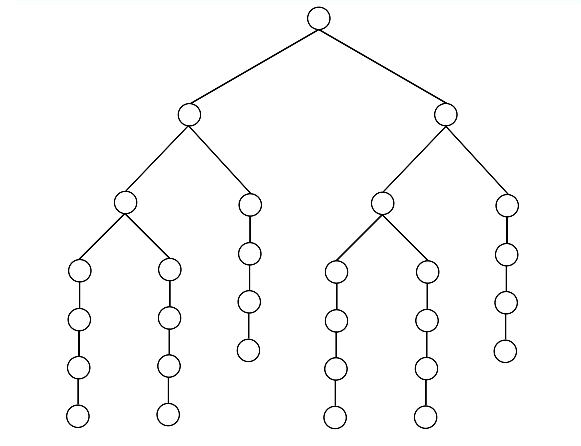
    \caption{Illustration of the extended tree $T^*$ for $n=6$ and $s=3$.}
    \label{fig:NST}
\end{figure}

The algorithm performs a random walk on the nodes of $T^{*}$ starting from the root node, as follows. 
Let $(i,j)$ be the label of the current node reached by the random walk.
First, we verify if $\theta$ belongs to the corresponding interval. To do this, the queries $\td{\theta}_i$ and $\td{\theta}_j$ are submitted $\beta$ times each, for some integer $\beta$ taken as input to the algorithm. If the majority vote of the responses indicates that either $\td{\theta} \leq \td{\theta}_i$ or $\td{\theta} > \td{\theta}_j$, we claim that the verification fails and the algorithm steps back to the parent node of $(i,j)$. In case the verification succeeds, if $j-i \ge 2$, i.e., the current node is neither a leaf node in $T$ nor an extended node, the algorithm checks if $\td{\theta}$ is larger or smaller than the median element of the current interval.  This is done by submitting, $\beta$ times, the query $\td{\theta}_{\ceil{\frac{i+j}{2}}}$. The algorithm proceeds in the direction indicated by the majority vote of the responses. Otherwise when $j-i = 1$, the algorithm proceeds to the only child of the current node. Algorithm~\ref{alg:NST} shows the pseudocode of this comparison tree algorithm.

\begin{algorithm}[t]
	\caption{Noisy Searching Tree \texttt{NST} Algorithm} \label{alg:NST}
	\SetKwInOut{Input}{Input}
	\SetKwInOut{Output}{Output}
	\DontPrintSemicolon
	\Input{Ordered sequence $(\td{\theta}_1,\ldots,\td{\theta}_{n-1})$, target $\td{\theta}$, number of queries $\td{m}$, integer $\beta$.}
	\Output{Index $\hat{l}$ of the sub-interval containing $\td{\theta}$.}
	\nl $i \gets 0$, $j \gets n$, $s\gets \ceil{\frac{\td{m}}{3\beta}}$. \\
	\nl Construct $T^{*}$ as described in text.\\
	\nl \For{$k=1,\ldots,s$} {
		\nl $b \gets 0$.\tcp*[r]{failure bit} 
		\nl \uIf{$i \neq 0$} {
			\nl Submit query $\td{U}_k \gets \td{\theta}_i$ for $\beta$ times.\\
			\nl Get noisy responses $\td{Y}_1, \ldots, \td{Y}_\beta$.\\
			\nl \lIf{$\sum_{i=1}^{\beta}\td{Y}_i < \ceil{\frac{\beta}{2}}$} {
				$b \gets 1$.
			}
		}
		\nl \uIf{$j \neq n$} {
			\nl Submit query $\td{U}_k \gets \td{\theta}_j$ for $\beta$ times.\\
			\nl Get noisy responses $\td{Y}_1, \ldots, \td{Y}_\beta$.\\
			\nl \lIf{$\sum_{i=1}^{\beta}\td{Y}_i \geq \ceil{\frac{\beta}{2}}$} {
				$b \gets 1$.
			}
		}
		\nl \uIf{$b = 1$} {
			\nl $(i,j) \gets \texttt{parent}\big((i,j)\big)$.
		}
		\nl \uElseIf{$j-i\geq 2$} {
			\nl Submit query $\td{U}_k \gets \td{\theta}_{\ceil{\frac{i+j}{2}}}$ for $\beta$ times.\\
			\nl Get noisy responses $\td{Y}_1, \ldots, \td{Y}_\beta$.\\
			\nl \uIf{$\sum_{i=1}^{\beta}\td{Y}_i < \ceil{\frac{\beta}{2}}$} {
				\nl $(i,j) \gets \texttt{left\_child}\big((i,j)\big)$.
			}
			\nl \uElse{
				\nl $(i,j) \gets \texttt{right\_child}\big((i,j)\big)$.
			}
		}
		\nl \uElse {
		    \tcp{extended chain}
			\nl $(i,j) \gets \texttt{child}\big((i,j)\big)$.
		}
	}
	\nl $\hat{l} \gets i +1$.\\
	\nl \Return{$\hat{l}$}
\end{algorithm}
\vspace{.5em}
\emph{\textbf{Insertion Sort with Noisy Comparison Tree:}}
As before, a noisy sorting code can be constructed by integrating the noisy searching algorithm (Algorithm~\ref{alg:NST}) as part of an insertion-based algorithm, as discussed in Section~\ref{sec:proof-achievability}. This can be done by simply replacing Line 2 of Algorithm~\ref{alg:proposed-sorting} by the following function call,\\[.2em]
\centerline{
$
\hat{l} \gets \texttt{NST}\big((\theta_{\sigma(1)},\ldots,\theta_{\sigma(i-1)}), \: \theta_{n-i}, \: \left\lfloor\tfrac{m}{n-1}\right\rfloor, \: \beta\big),
$}\\[.2em]
where $\beta$ is a parameter inputted to the noisy sorting algorithm, and corresponds to the number of query repetitions done at each node of the tree.

\vspace{.5em}
\emph{\textbf{Rate of Algorithm:}}
In this section, we apply a refined error analysis to the algorithm described above. The achievable rate of the algorithm is stated in the following theorem.
\begin{theorem}[Refined rate of the algorithm in~\cite{Feige1994}]
\label{thm:achievability2}
For $0 <\delta <\frac12$, sorting rate $R$ is achievable if 
\begin{align}
    3\beta(p,\delta)R &< 1- 2\delta,\label{eq:thm3_cond1}\\
    3\beta(p,\delta)R &< \frac{1}{\ln 2} D\left(\frac12-\frac{3\beta(p,\delta)R}{2}\bigg|\bigg|\delta\right),\label{eq:thm3_cond2}
\end{align}
where $\beta(p,\delta)$ is the smallest integer $\beta$ such that
\[
 1-\left(1-\sum_{i=\ceil{\beta/2}}^\beta\binom{\beta}{i}p^i(1-p)^{\beta-i}\right)^3 \le \delta.
\]
This rate is achieved using the algorithm in~\cite{Feige1994} based on noisy searching and insertion sort, where a pseudocode of the noisy searching algorithm is given in Algorithm~\ref{alg:NST}.
\end{theorem}

To prove Theorem~\ref{thm:achievability2}, we first consider the noisy searching problem and upper bound the probability of error of Algorithm~\ref{alg:NST} in the following lemma. The proof of the lemma is deferred to Appendix~\ref{app:pf-noisy-searching}

\begin{lemma}\label{lem:noisy_searching}
    Let $0<\delta<\frac12$ and positive integer $\tilde{m}$ be such that 
    \begin{equation} \label{eq:lem1_cond1}
    \frac12-\frac{\log n}{2s}>\delta,
    \end{equation}
    where $s = \ceil{\frac{\tilde{m}}{3\beta(p,\delta)}}$ and $\beta(p,\delta)$ is the smallest integer $\beta$ such that
\[
 1-\left(1-\sum_{i=\ceil{\beta/2}}^\beta\binom{\beta}{i}p^i(1-p)^{\beta-i}\right)^3 \le \delta.
\]
Then Algorithm~\ref{alg:NST} with input parameters $\beta=\beta(p,\delta)$ and $\tilde{m}$ outputs the correct interval for $\tilde{\theta}$ with probability at least
    \[
    1-\exp\left(-sD\left(\frac12-\frac{\log n}{2s}\bigg|\bigg|\delta\right)\right).
    \]
\end{lemma}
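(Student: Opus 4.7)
The plan is to interpret \texttt{NST} as a biased random walk on the extended tree $T^*$ and apply a Chernoff tail bound. Let $v^*$ denote the deepest node on the length-$s$ chain attached to the leaf of $T$ representing the true interval containing $\td{\theta}$, and let $d(v)$ denote the tree distance from $v$ to $v^*$. Since $T$ has depth at most $\lceil\log n\rceil$, the initial distance satisfies $d_0\le\lceil\log n\rceil+s$, and the output $\hat l$ is correct if and only if the walk terminates on the chain attached to the true leaf, which is equivalent to $d_s\le s$.

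Every iteration of \texttt{NST} moves the walker to the parent or to a child of the current node, so $d$ changes by exactly $\pm 1$; call the step \emph{correct} when $d$ decreases. The key observation is that if all of the (at most three) majority votes performed in a step agree with their true responses, then the resulting move is necessarily correct, whether the current node is on- or off-path. This requires a short case analysis over the node types: at an on-path internal node the three correct majorities cause both verifications to succeed and the median check to select the correct child; at an off-path node the endpoint majority on the side separating the node from $v^*$ correctly triggers the failure bit and sends the walk to the parent; the root, the leaves of $T$, and the extended-chain nodes are handled analogously but with only one or two checks. Since each group of $\beta$ queries has majority-error probability $P_{\mathrm{maj}}\triangleq\sum_{i=\lceil\beta/2\rceil}^\beta\binom{\beta}{i}p^i(1-p)^{\beta-i}$, and the groups use independent noise samples, the probability that a step is correct is at least $(1-P_{\mathrm{maj}})^3\ge 1-\delta$ by the definition of $\beta(p,\delta)$, uniformly over the current node and the past history.

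Let $K$ denote the total number of correct steps among the $s$ iterations. Because the $1-\delta$ lower bound holds conditionally on every history, a standard coupling argument yields that $K$ stochastically dominates a $\mathrm{Binomial}(s,1-\delta)$ random variable. Since $d_s=d_0+(s-2K)$, the success event $\{d_s\le s\}$ is implied by $\{K\ge(\lceil\log n\rceil+s)/2\}$, so
\[
\P\{\hat l\neq l^*\}\le\P\!\left(\mathrm{Binomial}(s,\delta)>s\bigl(\tfrac{1}{2}-\tfrac{\log n}{2s}\bigr)\right).
\]
Hypothesis~\eqref{eq:lem1_cond1} guarantees that the tail quantile $\tfrac{1}{2}-\tfrac{\log n}{2s}$ strictly exceeds the binomial mean $\delta$, so the Chernoff--Hoeffding upper tail bound gives $\P\{\hat l\neq l^*\}\le\exp(-sD(\tfrac{1}{2}-\tfrac{\log n}{2s}\|\delta))$, which is the claimed bound.

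The main technical point to get right is the case analysis in the second paragraph: one must carefully verify that the event ``all majorities correct'' induces a correct move at every type of node, paying special attention to off-path nodes where ``correct'' corresponds to moving up the tree rather than down, and to boundary nodes (root, $T$-leaves, extended-chain nodes) that perform fewer than three checks. Once this uniform per-step guarantee of $1-\delta$ is in hand, the remainder of the argument is a routine application of the binomial Chernoff bound.
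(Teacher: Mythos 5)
Your proposal is correct and takes essentially the same route as the paper: a uniform per-step guarantee that all (at most three) majority votes are correct with probability at least $(1-P_{\mathrm{maj}})^3\ge 1-\delta$, so each such step moves toward the true leaf/chain, followed by stochastic domination of the number of bad steps by a $\mathrm{Binomial}(s,\delta)$ variable and the Chernoff--Hoeffding bound yielding $\exp\left(-sD\left(\frac12-\frac{\log n}{2s}\big\|\delta\right)\right)$. The only cosmetic difference is that you track the distance to the deepest node of the true chain, whereas the paper orients the edges of $T^*$ toward the correct leaf and counts forward versus backward steps; the two bookkeeping schemes are equivalent.
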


\begin{proof}[Proof of Theorem~\ref{thm:achievability2}]
Let $m = \ceil{\frac{n\log n}{R}}$ be the total number of queries for noisy sorting, where the sorting rate $R$ satisfies conditions~(\ref{eq:thm3_cond1}) and (\ref{eq:thm3_cond2}) of Theorem~\ref{thm:achievability2}. Let $\td{m} = \floor{\frac{m}{n-1}}$ be the number of queries used in each insertion, and let $s = \ceil{\frac{\tilde{m}}{3\beta(p,\delta)}}$.\footnote{For simplicity of presentation, we assume in the proof that $\frac{\log n}{R}$ is an integer.} Let $P_e(i)$ denote the probability of error of the $i$th insertion. In order to upper bound $P_e(i)$ using Lemma~\ref{lem:noisy_searching}, we first verify the condition of Lemma~\ref{lem:noisy_searching}. Notice that
\[
s = \left\lceil\frac{\tilde{m}}{3\beta(p,\delta)}\right\rceil=\left\lceil\frac{\floor{\frac{\ceil{\frac{n\log n}{R}}}{n-1}}}{3\beta(p,\delta)}\right\rceil\geq \frac{\log n}{3R\beta(p,\delta)}.
\]
It follows that
\[
\frac{1}{2} - \frac{\log n}{2s} \geq \frac{1}{2} - \frac{3R\beta(p,\delta)}{2} > \delta,
\]
where the last inequality holds since $R$ satisfies condition~(\ref{eq:thm3_cond1}). Therefore, the error probability of the noisy sorting algorithm of~\cite{Feige1994} can be bounded by
\begin{align*}
    &\P\{\hat{\Pi}\neq\pi\} \leq \sum_{i=2}^n P_e(i) \\
    &\stackrel{(a)}{\le}\sum_{i=2}^n \exp\left\{-\frac{\tilde{m}}{3\beta(p,\delta)}D\left(\frac{1}{2}-\frac{3\beta(p,\delta)\log i}{2\tilde{m}}\Big|\Big|\delta\right)\right\}\\
    &\stackrel{(b)}{\le}n\exp\left\{-\frac{\tilde{m}}{3\beta(p,\delta)}D\left(\frac{1}{2}-\frac{3\beta(p,\delta)\log n}{2\tilde{m}}\Big|\Big|\delta\right)\right\}\\
    &\leq \exp\left\{ \log n \left( \ln 2 - \frac{1}{3R\beta(p,\delta)}D\left(\frac{1}{2}-\frac{3R\beta(p,\delta)}{2} \Big|\Big|\delta \right)  \right)   \right\}
\end{align*}
where $(a)$ follows by applying Lemma~\ref{lem:noisy_searching}, and $(b)$ follows since $D(x||y)$ is an increasing function in $x$ whenever $0<y<x<\frac{1}{2}$. Hence, $\P\{\hat{\Pi}=\pi\}$ goes to zero as $n\rightarrow\infty$ if the sorting rate $R$ satisfies condition~\eqref{eq:thm3_cond2}. 
\end{proof}

Note that the analysis in~\cite{Feige1994} considers the special case of $\delta = \frac{1}{3}$, whereas our analysis considers any $0<\delta < 1/2$ such that~(\ref{eq:thm3_cond1}) and (\ref{eq:thm3_cond2}) hold. It turns out that such a generalization of the analysis significantly improves the achievable sorting rates of the algorithm. Finally, we point out that the achievable sorting rates shown in Fig.~\ref{fig:comparison} for the existing algorithm correspond to the best choice of $\delta$ for each $p$.

\section{Proof of Converse Results}\label{sec:proof-conv}
In this section, we prove the two converse results Theorem~\ref{thm:vl-converse} and Theorem~\ref{thm:insert-sort-conv}. A key building block of both  proofs is a converse theorem on the \emph{channel coding with feedback problem} in~\cite{Burn1976}. In the following, we first formally define the channel coding with feedback problem and state the converse result.
\subsection{Coding Problem on a BSC$(p)$ Channel with Noiseless Feedback}
Suppose the transmitter has a set of $n$ possible messages $\{\psi_1,\ldots, \psi_n\}$ 
and he wants to transmit one of them to the receiver through a BSC$(p)$ channel. At each iteration $k$, the transmitter sends a bit $\td{X}_k$ and the receiver receives $\td{Y}_k=\td{X}_k\oplus \td{Z}_k$, where $\td{Z}_k \sim \mathrm{Bern}(p)$ is a sequence of i.i.d. random variables. Furthermore, after receiving each transmitted bit, the receiver can communicate with the transmitter through a noiseless discrete feedback channel. The alphabet of the feedback channel is assumed to be arbitrarily large but finite. A variable-length transmission scheme is specified as follows: To start the transmission, the receiver sends a signal $\td{W}_0$ to the transmitter through the feedback channel. At iteration $k$, the bit $\td{X}_k$ sent by the transmitter is decided by a causal protocol $\td{X}_k=\td{f}_k(\Psi^*, \td{X}_1^{k-1},\td{W}_0^{k-1})$, where $\Psi^*$ is the true message transmitter wants to send. Then the receiver receives $\td{Y}_k=\td{X}_k\oplus \td{Z}_k$ and generates $\td{\lambda}_k = \td{g}_k(\td{Y}_1^{k}, \td{W}_0^{k-1})$, where $\td{\lambda}_k\in [n+1]$. In the case when $\td{\lambda}_k\in [n]$, the transmission is terminated and the receiver outputs $\hat{\Psi}=\psi_{\td{\lambda}_k}$. Otherwise, the receiver decides to continue transmission, and generate $\td{W}_k=\td{h}_k(\td{Y}_1^{k},\td{W}_0^{k-1})$ to send to the transmitter through the feedback channel. The total number of bits sent by the transmitter is denoted as $\td{M}$. The performance of a transmission scheme is evaluated by two quantities. The probability of error $\td{P}_e\triangleq \P(\hat{\Psi}\neq \Psi^*)$ and the expected number of queries $\E[\td{M}]$. Assuming the true message $\Psi^*$ has a uniform prior over the message set $\{\psi_1,\ldots,\psi_n\}$, we have the following converse bound for any transmission scheme. 
\begin{theorem}[Theorem 1 in~\cite{Burn1976}]\label{thm:burn_conv}
For any variable-length transmission scheme on the $\mathrm{BSC}(p)$ channel with feedback, the expected number of bits sent by the transmitter satisfies that
\begin{equation}
        \E[\td{M}]> \frac{\log n}{1-H(p)}+\frac{\log\frac{1}{\td{P}_e}}{D(p||1-p)}-\frac{\log\log \frac{n}{\td{P}_e}}{D(p||1-p)}-\frac{\td{P}_e\log n}{1-H(p)}+O(1).
\end{equation}
\end{theorem}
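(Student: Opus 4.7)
The plan is to adapt Burnashev's classical two-phase argument for channel coding with noiseless feedback on a BSC. Let $P_k(i) \triangleq \P\{\Psi^* = \psi_i \mid \td{Y}^k, \td{W}_0^{k-1}\}$ denote the posterior belief at time $k$, and let $\mathcal{F}_k$ denote the sigma-algebra generated by $(\td{Y}^k, \td{W}_0^{k-1})$. I will track two scalar statistics of $P_k$: the posterior entropy $H_k \triangleq H(\Psi^* \mid \mathcal{F}_k)$ and the log-odds confirmation potential $\Phi_k \triangleq \log\frac{1-\tau_k}{\tau_k}$, where $\tau_k \triangleq 1 - \max_i P_k(i)$ is the MAP error probability at time $k$. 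The idea is that $H_k$ controls the ``communication phase'' in which the posterior spreads among many messages and drops from $\log n$ toward $O(1)$, while $\Phi_k$ controls the ``confirmation phase'' in which the posterior is already concentrated on a single candidate and must be pushed up to level $1-\td{P}_e$.

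First I would establish the entropy drift inequality. Since $\td{X}_k$ is a function of $\Psi^*$ and the past observations available at the transmitter, the data processing inequality and the BSC capacity formula give
\[
H_{k-1} - \E[H_k \mid \mathcal{F}_{k-1}] \;=\; I(\Psi^*; \td{Y}_k \mid \mathcal{F}_{k-1}) \;\le\; I(\td{X}_k; \td{Y}_k \mid \mathcal{F}_{k-1}) \;\le\; 1 - H(p),
\]
so that $\{H_k + k(1-H(p))\}$ is a submartingale. Optional stopping at the (integrable) stopping time $\td{M}$ yields $\E[H_{\td{M}}] + \E[\td{M}](1-H(p)) \ge \log n$. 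Combining this with Fano's inequality $\E[H_{\td{M}}] \le H(\td{P}_e) + \td{P}_e \log n$ already delivers the first term $\frac{\log n}{1-H(p)}$ and the negative correction $-\frac{\td{P}_e \log n}{1-H(p)}$ in the stated bound.

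Next I would prove the confirmation-phase drift estimate, which is the technical heart of the argument: $\E[\Phi_k - \Phi_{k-1} \mid \mathcal{F}_{k-1}] \le D(p \,\|\, 1-p)$, uniformly over admissible transmitter strategies. The proof is a careful case analysis on whether the currently transmitted bit matches the indicator of the MAP hypothesis; concavity of $\Phi$ as a function of $\tau$ then shows that the pointwise worst case is achieved by an ``all-in'' strategy that puts the entire posterior mass of alternative hypotheses against the MAP, which reduces the problem to the binary hypothesis test over BSC whose Kullback--Leibler rate is exactly $D(p\,\|\,1-p)$. Optional stopping gives $\E[\Phi_{\td{M}}] \le \Phi_0 + D(p\,\|\,1-p) \cdot \E[\td{M}]$, and the terminal condition $\E[\Phi_{\td{M}}] \ge \log\frac{1-\td{P}_e}{\td{P}_e} - O(\log\log(n/\td{P}_e))$ (obtained via Markov's inequality applied to $\tau_{\td{M}}$ on the event $\{\hat{\Psi} = \Psi^*\}$, which absorbs the $\log\log$ correction) produces the second and third terms of the theorem.

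Finally I would glue the two one-sided inequalities by a convex combination whose weights depend on the position of $\E[\Phi_{\td{M}}]$ relative to $\log n$: intuitively, the trajectory must first spend roughly $\frac{\log n}{1-H(p)}$ queries reducing entropy and then roughly $\frac{\log(1/\td{P}_e)}{D(p\|1-p)}$ additional queries confirming the MAP, and the two-potential submartingale argument makes this additive lower bound rigorous up to the $O(1)$ slack absorbed at the boundary. The main obstacle I expect is the sharp confirmation drift $\E[\Phi_k - \Phi_{k-1} \mid \mathcal{F}_{k-1}] \le D(p\,\|\,1-p)$: the naive pointwise bound on a single log-likelihood update is $\log\frac{1-p}{p}$, which is strictly larger than $D(p\,\|\,1-p)$, so a genuine convexity/mixing argument over the posterior is needed to avoid losing the correct second-order rate. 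The other delicate piece is controlling the $\log\log(n/\td{P}_e)$ boundary correction so that it does not degrade to an additive $O(\log n)$ loss at the stopping time.
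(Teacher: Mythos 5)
First, note that the paper does not actually prove this statement: it is imported verbatim as Theorem~1 of the cited reference \cite{Burn1976} (Burnashev), and the paper only uses it as a black box in its two converse proofs. Your proposal is therefore an attempt to reconstruct Burnashev's own converse argument, and structurally it follows the same classical two-phase route as the source: a capacity-drift bound on the posterior entropy ($\E[H_{k-1}-H_k\cond\mathcal{F}_{k-1}]\le 1-H(p)$, combined with Fano, giving the $\frac{\log n}{1-H(p)}-\frac{\td{P}_e\log n}{1-H(p)}$ terms) and a second drift bound at rate $D(p\|1-p)$ on a log-odds/log-entropy potential, giving the $\frac{\log(1/\td{P}_e)}{D(p\|1-p)}$ term. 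Both drift claims are correct in substance, and you rightly flag that the confirmation drift cannot be obtained from the pointwise bound $\log\frac{1-p}{p}$.

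The genuine gap is the final ``gluing'' step, which is the actual heart of Burnashev's proof and is only gestured at in your sketch. Running optional stopping separately on the two potentials gives two one-sided inequalities, each of which individually lower-bounds $\E[\td{M}]$ by one of the two phases; it does not follow that $\E[\td{M}]$ is at least their \emph{sum}, since nothing in that argument prevents the same channel uses from simultaneously reducing entropy and boosting the MAP log-odds. Burnashev resolves this by constructing a single composite potential (essentially $H_k/C$ while the posterior entropy is above a threshold and $\mathrm{const}+\log H_k/C_1$ below it), proving that this piecewise-defined process plus $k$ is a submartingale with bounded increments --- the delicate point being the drift estimate exactly at the crossover, where one must compare an entropy decrease with a $\log$-entropy decrease --- and only then applying optional stopping once; the $-\frac{\log\log(n/\td{P}_e)}{D(p\|1-p)}$ term is the price of this phase boundary, not something recovered by a Markov-inequality estimate on $\tau_{\td{M}}$ at the stopping time as you suggest. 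Your proposed ``convex combination whose weights depend on the position of $\E[\Phi_{\td{M}}]$'' is not a proof of this step, and without it the argument only yields $\E[\td{M}]\gtrsim\max\bigl\{\frac{\log n}{1-H(p)},\frac{\log(1/\td{P}_e)}{D(p\|1-p)}\bigr\}$ rather than the stated additive bound. A secondary (fixable) issue is that the drift bound for $\Phi_k$ defined through the MAP index needs care when the maximizing index changes and when the posterior is diffuse (for $n>2$ the MAP mass can be below $1/2$), which is one reason Burnashev works with the entropy and its logarithm rather than the MAP log-odds.
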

In the following two sections, we use Theorem~\ref{thm:burn_conv} to prove Theorem~\ref{thm:vl-converse} and Theorem~\ref{thm:insert-sort-conv} by connecting the noisy sorting problem and the noisy searching problem to the channel coding with feedback problem.
\subsection{Converse: Variable-length Noisy Sorting Codes}
To prove Theorem~\ref{thm:vl-converse}, we show that for each noisy sorting code that sorts $n$ elements with $M$ queries and average case error probability $\P(\hat{\Pi}\neq \Pi)=P_e$ under the uniform prior for the true permutation $\Pi$, it induces a coding scheme for the channel coding with feedback problem with $n!$ possible input messages. The induced coding scheme also has $M$ number of bits sent and error probability $P_e$. Then Theorem~\ref{thm:vl-converse} follows from Theorem~\ref{thm:burn_conv}.

Firstly, we view the $n!$ message $\psi_1,\ldots,\psi_{n!}$ as the $n!$ possible permutations $\pi_1,\ldots,\pi_{n!}$ on $n$ elements in an arbitrary order. Let $\pi^*$ denote the permutation corresponding to the true message $\Psi^*$. For a noisy sorting code with query strategy $\{f_k\}_{k=1}^\infty$ associated with stopping time $M$ and estimator $\hat{\Pi}=\hat{\pi}(\td{Y}^M,U^M,V^M)$, the induced coding scheme is described as follows: To start the transmission, the receiver sends $\td{W}_0=f_1=(U_1,V_1)$, which is the first pair of elements compared by the noisy sorting code, to the transmitter. For the simplicity of notation, let $(U_k,V_k)\triangleq (\theta_{k_1},\theta_{k_2})$. The at each iteration $k=1,2,\ldots,$ the transmitter sends $\td{X}_k=\mathbbm{1}_{\pi^*(k_1)<\pi^*(k_2)}$, which is a function of the message $\pi^*$ and $(U_k,V_k)$, through the BSC$(p)$ channel. The receiver receives the channel output $\td{Y}_k=\td{X}_k\oplus \td{Z}_k$ and generates 
 $\td{\lambda}_k$ based on the stopping time $M$ of the noisy sorting code. If the stopping criteria is not satisfied, then $\td{\lambda}_k$ is set to $n!+1$, i.e., the transmission continues and the receiver generates the feedback symbol $\td{W}_k=f_{k+1}(\td{Y}_{1}^{k},\td{W}_0^{k-1})=(U_{k+1},V_{k+1})$ to send to the transmitter. Otherwise, the transmission terminates and $\td{\lambda}_k$ is set to the message corresponding to the permutation $\hat{\Pi} = \hat{\pi}(\td{Y}^k,U^k,V^k)$, where $\hat{\Pi}$ is the estimator of the noisy sorting code.
In other words, at each iteration, the receiver sends the queried pair of the noisy sorting code to the transmitter through the noiseless feedback channel, and the transmitter sends the bit indicating the comparison between the queried pair through the BSC$(p)$ channel. Upon receiving the bit, the receiver decides to output an estimation or continue transmission based on the stopping criteria of the noisy sorting code. One can check that the induced coding scheme has the same error probability $P_e$ and the number of transmitted bits is exactly $M$. By Theorem~\ref{thm:burn_conv}, we have that the expected number of queries made by the noisy sorting code satisfies that
\begin{equation} \label{eqn:converse_proof}
        \E[M]> \frac{\log (n!)}{1-H(p)}+\frac{\log\frac{1}{P_e}}{D(p||1-p)}-\frac{\log\log \frac{n!}{P_e}}{D(p||1-p)}-\frac{P_e\log (n!)}{1-H(p)}+O(1),
\end{equation}
and Theorem~\ref{thm:vl-converse} follows because the worst case error probability of a noisy sorting code is always at least the average case error probability.

\begin{remark}
    The converse bound presented in Theorem~\ref{thm:vl-converse} remains valid when the crossover probability $p$ is known. 
    However, it has been established in~\cite{gu2023optimal} that in this scenario, $C(p)=\frac{1}{\frac{1}{1-p}+\frac{1}{D(p||1-p)}}$, revealing that the bound in Theorem~\ref{thm:vl-converse} is not fully tight. This discrepancy arises because the bound is derived from treating the noisy sorting problem as a specialized case of the channel coding with feedback problem, and employing Burnashev's converse for channel coding.
    As demonstrated in the proof, we conceptualize noisy sorting algorithms as specialized instances of feedback coding schemes, where the feedback message is constrained to pairwise comparisons. More specifically, if we can set the transferred message $\tilde{X}_k=\mathbbm{1}_{\{\pi^*\in\mathcal{A}\}}$ for \emph{any} set of permutations $\mathcal{A}$ at each iteration, then the Burnashev's converse for channel coding with feedback would result in a tight bound. However, pairwise comparison restricts the type of query set $\mathcal{A}$ to be the set of permutations such that $\pi^*(i) < \pi^*(j)$. 
    This observation suggests the necessity of considering the constraint on the query set $\mathcal{A}$ to attain a tight bound based on Burnashev's methodology.
    Regrettably, with this constraint, analyzing the posterior distribution at each iteration turns out to be a formidable challenge, and whether Burnashev's methodology can lead to a tight converse remains unclear.
\end{remark}
\subsection{Converse: Insertion-based Noisy Sorting Codes}
\label{sec:proof-ins-conv}
The key components of the insertion-based noisy sorting codes are the noisy searching algorithms applied at each insertion step. Thus, to show the converse for the family of insertion-based noisy sorting codes, we first state and prove a converse result for all noisy searching algorithms. Recall the setup for the noisy searching problem introduced in Section~\ref{sec:noisy-search}. We have the following lower bound for the expected number of queries.
\begin{theorem}\label{thm:noisy-search-conv}
    Suppose the true position $l^*$ follows a uniform prior. For the clarity of notation, we use $L^*$ to denote the random true position. For any variable-length noisy searching scheme with error probability $P_e\triangleq \P(\hat{L}\neq L^*)$, its expected number of queries $\E[\td{M}]$ satisfies that
\begin{equation}
    \E[\td{M}]> \frac{\log n}{1-H(p)}+\frac{\log\frac{1}{\td{P}_e}}{D(p||1-p)}-\frac{\log\log \frac{n}{\td{P}_e}}{D(p||1-p)}-\frac{\td{P}_e\log n}{1-H(p)}+O(1).
\end{equation}
\end{theorem}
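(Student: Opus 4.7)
The plan is to mimic the reduction used for Theorem~\ref{thm:vl-converse} in the previous subsection: given any variable-length noisy searching scheme for a sorted sequence of length $n+1$ with error probability $\tilde{P}_e$ and expected query count $\E[\tilde M]$, I will construct an equivalent variable-length transmission scheme over the $\mathrm{BSC}(p)$ with noiseless feedback whose message set has cardinality $n$, whose error probability is $\tilde{P}_e$, and whose expected number of channel uses is $\E[\tilde M]$. The theorem will then follow by a direct application of Theorem~\ref{thm:burn_conv}.

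The construction goes as follows. Identify each of the $n$ candidate positions with a distinct message, so that the uniform prior on $L^*$ becomes the uniform prior on the true message $\Psi^*$. The transmitter is told $L^*$ (equivalently, the index of the sub-interval $I_{L^*}$ that contains $\tilde\theta$). At iteration $k$, the receiver runs the noisy searching querying rule $\tilde f_k$ to obtain the next query index $j_k$, and forwards $j_k \in [n-1]$ to the transmitter over the noiseless feedback channel; this is permissible since the feedback alphabet can be arbitrarily large but finite. The transmitter then computes the noiseless comparison bit
\[
\tilde X_k \;=\; \mathbbm{1}\{\tilde\theta_{j_k} < \tilde\theta\},
\]
which is a deterministic function of $j_k$ and $L^*$ alone, and sends $\tilde X_k$ over the $\mathrm{BSC}(p)$. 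The receiver observes $\tilde Y_k = \tilde X_k \oplus \tilde Z_k$, updates its internal state of the searching algorithm, and uses the algorithm's stopping rule to decide whether to emit $\hat L$ as its message estimate or to iterate.

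By construction, the joint law of the query–response sequence $(\tilde U_k, \tilde Y_k)$ in the induced coding scheme is identical to its law under the original noisy searching algorithm. Consequently the stopping time, the estimator $\hat L$, and hence the error probability $\tilde P_e$ and the expected number of transmitted bits $\E[\tilde M]$ are preserved by the reduction. Applying Theorem~\ref{thm:burn_conv} with the induced scheme on $n$ messages yields exactly the desired lower bound on $\E[\tilde M]$.

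The only real subtleties in executing this plan are (i) arguing that the feedback of the index $j_k$ is legitimate within the finite-alphabet feedback model of Theorem~\ref{thm:burn_conv}—handled by noting $j_k \in [n-1]$ is a finite alphabet for each fixed $n$—and (ii) verifying that the stopping and decoding rules of the searching algorithm can be mirrored by the receiver using only $(\tilde Y^k, \tilde W_0^{k-1})$, which holds because both sequences are available causally to the receiver. I do not expect any genuine obstacle beyond these bookkeeping points; the heavy lifting has already been done in Theorem~\ref{thm:burn_conv}.
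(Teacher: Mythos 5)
Your proposal is correct and follows essentially the same route as the paper: identify the $n$ sub-intervals with the $n$ messages, let the receiver forward the searching algorithm's query through the noiseless feedback channel, have the transmitter send the noiseless comparison bit over the $\mathrm{BSC}(p)$, and invoke Theorem~\ref{thm:burn_conv} after noting that the induced coding scheme preserves the error probability and the number of transmitted bits. The only cosmetic difference is that the paper phrases the transmitted bit as $\mathbbm{1}_{\{\td{U}_k<\theta^*\}}$ with $\theta^*$ the midpoint of the true sub-interval, which is the same function of the query index and $L^*$ as yours.
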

\begin{proof}[Proof of Theorem~\ref{thm:noisy-search-conv}]
    Similarly to the proof of Theorem~\ref{thm:vl-converse}, we prove Theorem~\ref{thm:noisy-search-conv} by drawing a connection to the channel coding with feedback problem. 

    To see the connection, we first observe that the $n$ possible messages $\psi_1,\ldots,\psi_n$ in the channel coding problem can be viewed as the $n$ sub-intervals $I_1,\ldots,I_n$ respectively. Denote the mid point of the sub-interval corresponding to the true message $\Psi^*$ by $\theta^*$. For a noisy searching scheme with query strategy $\{\td{f}_k\}_{k=1}^\infty$ associated with stopping time $\td{M}$ and estimator $\hat{L}=\hat{l}(\td{Y}^{\td{M}},\td{U}^{\td{M}})$, the induced coding scheme is described as follows: To start the transmission, the receiver sends $\td{W}_0=\tilde{f}_1=\td{U}_1$, which is the first query in the query scheme, to the transmitter. Then at each iteration $k=1,2,\ldots$, the transmitter sends $\td{X}_k=\mathbbm{1}_{\{\td{U}_k<\theta^*\}}$ through the BSC$(p)$ channel. The receiver receives the channel output $\td{Y}_k=\td{X}_k\oplus \td{Z}_k$ and generates $\td{\lambda}_k$ based on the stopping time $\td{M}$ of the noisy searching algorithm. If the stopping criteria is not satisfied, then $\td{\lambda}_k$ is set to $n+1$, i.e., the transmission continues and the receiver generates the feedback symbol $\td{W}_k=\td{f}_{k+1}(\td{Y}_{1}^{k},\td{W}_0^{k-1})=\td{U}_{k+1}$ to send to the transmitter. Otherwise, the transmission terminates and $\td{\lambda}_k$ is set to message $\hat{L}=\hat{l}(\td{Y}^k,\td{U}^k)$, where $\hat{L}$ is the estimator of the noisy searching algorithm. In other words, at each iteration, the receiver sends the query point to the transmitter through the feedback channel, and the transmitter sends the bit indicating whether the message is on the left of the query point through the BSC$(p)$ channel. Upon receiving the bit, the receiver decides to output an estimation or continue transmission based on the stopping rule. One can check that the induced coding scheme has exactly the same error probability $\td{P}_e$ and the number of bits sent is exactly $\td{M}$. Since each noisy searching algorithm induces a coding scheme, Theorem~\ref{thm:noisy-search-conv} follows from Theorem~\ref{thm:burn_conv}.
\end{proof}
With Theorem~\ref{thm:noisy-search-conv}, we are now ready to prove Theorem~\ref{thm:insert-sort-conv}. 
For $i\in[n-1]$, let $\mathcal{E}_i$ denote the event that estimation at $i$th insertion step is incorrect. We assume without loss of generality that $\pi$ follows a uniform prior. For the clarity of notation, we use $\Pi$ to denote the uniformly distributed true permutation. By the similar argument as in the proof of Theorem~\ref{thm:vl-achievability}, we know that 
\[
\P(\hat{\Pi}=\Pi)=\prod_{k=1}^{n-1}\left(1-\P(\mathcal{E}_k \cond \cap_{i=1}^{k-1}\mathcal{E}_i^c)\right),
\]
where $\P(\mathcal{E}_i|\cap_{k=1}^{i-1}\mathcal{E}_k^c)$ represents the error probability of the noisy searching algorithm called at $i$th insertion step given a uniform prior for the correct position. Notice that 
\begin{align*}
    \P(\hat{\Pi}=\Pi)&=
    \prod_{k=1}^{n-1}\left(1-\P(\mathcal{E}_k \cond \cap_{i=1}^{k-1}\mathcal{E}_i^c)\right)\\
    &=\exp{\left(\sum_{i=1}^{n-1}\ln\left(1-\P(\mathcal{E}_i|\cap_{k=1}^{i-1}\mathcal{E}_k^c)\right)\right)}\\
    &\stackrel{(a)}{\le} \exp{\left(-\sum_{i=1}^{n-1}\P(\mathcal{E}_i|\cap_{k=1}^{i-1}\mathcal{E}_k^c)\right)},
\end{align*}
where (a) follows because $\ln x \leq x-1$.
We can see that a necessary condition for $\lim_{n\rightarrow\infty}\P(\hat{\Pi}=\Pi)=1$ is that at least $n-w$ terms in $\P(\mathcal{E}_1),\P(\mathcal{E}_2|\mathcal{E}_1^c),\ldots,\P(\mathcal{E}_{n-1}|\mathcal{E}_1^c\cap\cdots\cap\mathcal{E}_{n-2}^c)$ are $o(1/n)$, where $w=o(n)$. Assume otherwise, there exists $w=\Omega(n)$ terms that are $\Omega(1/n)$. Then the sum of these $w$ terms must be in the order $\Omega(1)$, which implies that $\lim_{n\rightarrow\infty}\P(\hat{\Pi}=\Pi)\neq 1$. We denote the iteration indices of these $n-w$ terms by $j_1,\ldots,j_{n-w}$. In other words, for every $j\in\{j_1,\ldots,j_{n-w}\}$, we have $\P(\mathcal{E}_{j}|\mathcal{E}_1^c\cap\cdots\cap\mathcal{E}_{j-1}^c)=o(1/n)$.

Let $M_k$ denote the number of queries in the $k$th insertion step. By Theorem~\ref{thm:noisy-search-conv}, we have that for every $l\in [n-w]$, 
\begin{equation}
    \E[M_{j_l}]>\frac{\log(j_l+1)}{1-H(p)}+\frac{\log n}{D(p||1-p)}+O(\log\log n).
\end{equation}
Furthermore, we can lower bound the total expected number of queries in the noisy sorting code as
\begin{align}
    \E[M]&\ge\E\left[\sum_{l=1}^{n-w}M_{j_l}\right]\nonumber\\
    &=\sum_{l=1}^{n-w}\E[M_{j_l}]\nonumber\\
    &>\sum_{l=1}^{n-w}\left[\frac{\log(j_l+1)}{1-H(p)}+\frac{\log n}{D(p||1-p)}+O(\log\log n)\right]\nonumber\\
    &\stackrel{(a)}{\ge}\sum_{k=1}^{n-w}\left[\frac{\log(k+1)}{1-H(p)}+\frac{\log n}{D(p||1-p)}+O(\log\log n)\right]\nonumber\\
    &=\frac{\log(n-w+1)!}{1-H(p)}+\frac{(n-w)\log n}{D(p||1-p)}+O(n\log\log n)\nonumber\\
    &\stackrel{(b)}{=}n\log n\left(\frac{1}{1-H(p)}+\frac{1}{D(p||1-p)}\right)+O(n\log n),\label{eq:exp_lower}
\end{align}
where (a) follows by the fact that $j_1,\ldots,j_{n-w}$ are distinct indices, so $\sum_{l=1}^{n-w}\log(j_l+1)\ge \sum_{k=1}^{n-w}\log(k+1)$, and (b) follows by Stirling's approximation and the fact that $w=o(n)$. By equation~\eqref{eq:exp_lower}, we have $ R_\mathrm{ins}=\frac{n\log n}{\E[M]}<\frac{1}{\frac{1}{1-H(p)}+\frac{1}{D(p||1-p)}}$, which completes the proof.

\section{Concluding Remarks}\label{sec:conclusion}
In this paper, we proposed a new framework to study the noisy sorting problem. We introduced the noisy sorting capacity, which is defined as the maximum ratio $\frac{n\log n}{\E[M]}$ such that the ranking of the $n$ elements can be correctly found with probability going to one by an algorithm with $\E[M]$ expected number of queries. We provide two lower bounds on the noisy sorting capacity, which strictly improve the best-known achievable rate. 
Notably, the proposed algorithm does not require the knowledge of the noise level $p$.
We also provide two converse results for the problem, one for general variable-length noisy sorting algorithms and the other for the family of insertion-based noisy sorting algorithms. 

There are many intriguing open problems that remain. Firstly, we can see from Fig.~\ref{fig:comparison} that there exists a gap between our converse and achievability results. Therefore, a tighter characterization of the noisy sorting capacity is of interest. We conjecture that an algorithm based on the idea of posterior matching in~\cite{Shayevitz2011} could potentially achieve a higher sorting rate. The main idea of the algorithm is as follows. The algorithm assumes a uniform prior of the space $\mathcal{S}_n$ of all possible $n!$ permutations. At each iteration, the algorithm considers each of the $\binom{n}{2}$ possible pair of items for comparison. Notice that each pair of items $\{\theta_i,\theta_j\}$ naturally partitions the space $\mathcal{S}_n$ into two equal-sized parts where the first part includes all the permutations such that $\theta_i<\theta_j$ and the other part includes all the permutations such that $\theta_i>\theta_j$. The algorithm computes the sum of posterior over these two parts for each pair and chooses the pair with the smallest difference of the two sums as the query to be made at this iteration. The iteration is then completed by updating all the posteriors by the response to the query. The algorithms terminate when at least one of the posterior reaches a preset threshold. According to empirical experiments, the algorithms achieve vanishing error probability with fewer queries than the proposed algorithms in this paper. Secondly, we have considered the noisy sorting capacity under the assumption that the error probability for each query is fixed and independent of the queried pair. This leads to a natural extension of the error model where the error probability between each pair could be distinct. Finally, we considered adaptive noisy sorting codes in this work. Another open question is how the capacity changes when the noisy sorting codes are restricted to non-adaptive algorithms.


\section*{Acknowledgment}
The authors would like to thank Young-Han Kim and Wei Yu for stimulating discussions on the problem. This work was supported in part by the NSERC Discovery Grant No. RGPIN-2019-05448, in part by the NSERC Collaborative Research and Development Grant CRDPJ 54367619 and in part by NSF Grants IIS-1901252 and CCF-190949.

\bibliographystyle{IEEEtran}
\bibliography{bibliography}

\begin{thebibliography}{10}
\providecommand{\url}[1]{#1}
\csname url@samestyle\endcsname
\providecommand{\newblock}{\relax}
\providecommand{\bibinfo}[2]{#2}
\providecommand{\BIBentrySTDinterwordspacing}{\spaceskip=0pt\relax}
\providecommand{\BIBentryALTinterwordstretchfactor}{4}
\providecommand{\BIBentryALTinterwordspacing}{\spaceskip=\fontdimen2\font plus
\BIBentryALTinterwordstretchfactor\fontdimen3\font minus
  \fontdimen4\font\relax}
\providecommand{\BIBforeignlanguage}[2]{{%
\expandafter\ifx\csname l@#1\endcsname\relax
\typeout{** WARNING: IEEEtran.bst: No hyphenation pattern has been}%
\typeout{** loaded for the language `#1'. Using the pattern for}%
\typeout{** the default language instead.}%
\else
\language=\csname l@#1\endcsname
\fi
#2}}
\providecommand{\BIBdecl}{\relax}
\BIBdecl

\bibitem{isit2022paper}
Z.~Wang, N.~Ghaddar, and L.~Wang, ``Noisy sorting capacity,'' in \emph{Proc.
  {IEEE} Internat. Symp. Inf. Theory}, 2022, pp. 2541--2546.

\bibitem{isit2023paper}
Z.~Wang, N.~Ghaddar, B.~Zhu, and L.~Wang, ``Variable-length insertion-based
  noisy sorting,'' in \emph{Proc. {IEEE} Internat. Symp. Inf. Theory}, 2023,
  pp. 1782--1787.

\bibitem{Knuth1997}
D.~E. Knuth, \emph{The art of computer programming}.\hskip 1em plus 0.5em minus
  0.4em\relax Pearson Education, 1997, vol.~3.

\bibitem{Ash1990}
R.~Ash, \emph{Information Theory}, ser. Dover books on advanced
  mathematics.\hskip 1em plus 0.5em minus 0.4em\relax Dover Publications, 1990.

\bibitem{Mohajer2017}
S.~Mohajer, C.~Suh, and A.~Elmahdy, ``Active learning for top-$k$ rank
  aggregation from noisy comparisons,'' in \emph{Proceedings of the 34th
  International Conference on Machine Learning - Volume 70}, ser.
  ICML'17.\hskip 1em plus 0.5em minus 0.4em\relax JMLR.org, 2017, p.
  2488–2497.

\bibitem{Falahatgar2017}
\BIBentryALTinterwordspacing
M.~Falahatgar, A.~Orlitsky, V.~Pichapati, and A.~T. Suresh, ``Maximum selection
  and ranking under noisy comparisons,'' in \emph{Proceedings of the 34th
  International Conference on Machine Learning}, ser. Proceedings of Machine
  Learning Research, D.~Precup and Y.~W. Teh, Eds., vol.~70.\hskip 1em plus
  0.5em minus 0.4em\relax PMLR, 06--11 Aug 2017, pp. 1088--1096. [Online].
  Available: \url{https://proceedings.mlr.press/v70/falahatgar17a.html}
\BIBentrySTDinterwordspacing

\bibitem{Shah2018}
\BIBentryALTinterwordspacing
N.~B. Shah and M.~J. Wainwright, ``Simple, robust and optimal ranking from
  pairwise comparisons,'' \emph{J. Mach. Learn. Res.}, vol.~18, no. 199, pp.
  1--38, 2018. [Online]. Available:
  \url{http://jmlr.org/papers/v18/16-206.html}
\BIBentrySTDinterwordspacing

\bibitem{Heckel2019}
\BIBentryALTinterwordspacing
R.~Heckel, N.~B. Shah, K.~Ramchandran, and M.~J. Wainwright, ``{Active ranking
  from pairwise comparisons and when parametric assumptions do not help},''
  \emph{Ann. Stat.}, vol.~47, no.~6, pp. 3099 -- 3126, 2019. [Online].
  Available: \url{https://doi.org/10.1214/18-AOS1772}
\BIBentrySTDinterwordspacing

\bibitem{Agarwal2017}
\BIBentryALTinterwordspacing
A.~Agarwal, S.~Agarwal, S.~Assadi, and S.~Khanna, ``Learning with limited
  rounds of adaptivity: Coin tossing, multi-armed bandits, and ranking from
  pairwise comparisons,'' in \emph{Proceedings of the 2017 Conference on
  Learning Theory}, ser. Proceedings of Machine Learning Research, S.~Kale and
  O.~Shamir, Eds., vol.~65.\hskip 1em plus 0.5em minus 0.4em\relax PMLR, 07--10
  Jul 2017, pp. 39--75. [Online]. Available:
  \url{https://proceedings.mlr.press/v65/agarwal17c.html}
\BIBentrySTDinterwordspacing

\bibitem{lou2022active}
H.~Lou, T.~Jin, Y.~Wu, P.~Xu, Q.~Gu, and F.~Farnoud, ``Active ranking without
  strong stochastic transitivity,'' \emph{Advances in neural information
  processing systems}, vol.~35, pp. 297--309, 2022.

\bibitem{ren2019sample}
W.~Ren, J.~K. Liu, and N.~Shroff, ``On sample complexity upper and lower bounds
  for exact ranking from noisy comparisons,'' \emph{Advances in Neural
  Information Processing Systems}, vol.~32, 2019.

\bibitem{Feige1994}
U.~Feige, P.~Raghavan, D.~Peleg, and E.~Upfal, ``Computing with noisy
  information,'' \emph{SIAM J. Comput.}, vol.~23, no.~5, pp. 1001--1018, 1994.

\bibitem{gu2023optimal}
Y.~Gu and Y.~Xu, ``Optimal bounds for noisy sorting,'' in \emph{Proceedings of
  the 55th Annual ACM Symposium on Theory of Computing}, 2023, pp. 1502--1515.

\bibitem{Ailon2008}
\BIBentryALTinterwordspacing
N.~Ailon, M.~Charikar, and A.~Newman, ``Aggregating inconsistent information:
  Ranking and clustering,'' \emph{J. ACM}, vol.~55, no.~5, nov 2008. [Online].
  Available: \url{https://doi.org/10.1145/1411509.1411513}
\BIBentrySTDinterwordspacing

\bibitem{Braverman2009}
\BIBentryALTinterwordspacing
M.~Braverman and E.~Mossel, ``Sorting from noisy information,'' 2009. [Online].
  Available: \url{https://arxiv.org/abs/0910.1191}
\BIBentrySTDinterwordspacing

\bibitem{Ailon2011adv}
\BIBentryALTinterwordspacing
N.~Ailon, ``Active learning ranking from pairwise preferences with almost
  optimal query complexity,'' in \emph{Advances in Neural Information
  Processing Systems}, J.~Shawe-Taylor, R.~Zemel, P.~Bartlett, F.~Pereira, and
  K.~Q. Weinberger, Eds., vol.~24.\hskip 1em plus 0.5em minus 0.4em\relax
  Curran Associates, Inc., 2011. [Online]. Available:
  \url{https://proceedings.neurips.cc/paper/2011/file/81448138f5f163ccdba4acc69819f280-Paper.pdf}
\BIBentrySTDinterwordspacing

\bibitem{Negahban2012}
S.~Negahban, S.~Oh, and D.~Shah, ``Iterative ranking from pair-wise
  comparisons,'' in \emph{Advances in Neural Information Processing Systems},
  F.~Pereira, C.~J.~C. Burges, L.~Bottou, and K.~Q. Weinberger, Eds.,
  vol.~25.\hskip 1em plus 0.5em minus 0.4em\relax Curran Associates, Inc.,
  2012.

\bibitem{Wauthier2013}
\BIBentryALTinterwordspacing
F.~Wauthier, M.~Jordan, and N.~Jojic, ``Efficient ranking from pairwise
  comparisons,'' in \emph{Proceedings of the 30th International Conference on
  Machine Learning}, ser. Proceedings of Machine Learning Research, S.~Dasgupta
  and D.~McAllester, Eds., vol.~28, no.~3.\hskip 1em plus 0.5em minus
  0.4em\relax Atlanta, Georgia, USA: PMLR, 17--19 Jun 2013, pp. 109--117.
  [Online]. Available: \url{https://proceedings.mlr.press/v28/wauthier13.html}
\BIBentrySTDinterwordspacing

\bibitem{Rajkumar2014}
A.~Rajkumar and S.~Agarwal, ``A statistical convergence perspective of
  algorithms for rank aggregation from pairwise data,'' in \emph{Proceedings of
  the 31st International Conference on International Conference on Machine
  Learning - Volume 32}, ser. ICML'14.\hskip 1em plus 0.5em minus 0.4em\relax
  JMLR.org, 2014, p. I–118–I–126.

\bibitem{Shah2016}
N.~B. Shah, S.~Balakrishnan, A.~Guntuboyina, and M.~J. Wainwright,
  ``Stochastically transitive models for pairwise comparisons: Statistical and
  computational issues,'' in \emph{Proceedings of the 33rd International
  Conference on International Conference on Machine Learning - Volume 48}, ser.
  ICML'16.\hskip 1em plus 0.5em minus 0.4em\relax JMLR.org, 2016, p. 11–20.

\bibitem{Mao2018}
\BIBentryALTinterwordspacing
C.~Mao, J.~Weed, and P.~Rigollet, ``Minimax rates and efficient algorithms for
  noisy sorting,'' in \emph{Proceedings of Algorithmic Learning Theory}, ser.
  Proceedings of Machine Learning Research, F.~Janoos, M.~Mohri, and
  K.~Sridharan, Eds., vol.~83.\hskip 1em plus 0.5em minus 0.4em\relax PMLR,
  07--09 Apr 2018, pp. 821--847. [Online]. Available:
  \url{https://proceedings.mlr.press/v83/mao18a.html}
\BIBentrySTDinterwordspacing

\bibitem{BTL1952}
\BIBentryALTinterwordspacing
R.~A. Bradley and M.~E. Terry, ``Rank analysis of incomplete block designs: I.
  the method of paired comparisons,'' \emph{Biometrika}, vol.~39, no. 3/4, pp.
  324--345, 1952. [Online]. Available:
  \url{http://www.jstor.org/stable/2334029}
\BIBentrySTDinterwordspacing

\bibitem{Ajtai2009}
M.~Ajtai, V.~Feldman, A.~Hassidim, and J.~Nelson, ``Sorting and selection with
  imprecise comparisons,'' in \emph{ACM Trans. Algorithms}, vol.~12, 07 2009,
  pp. 37--48.

\bibitem{Chen2015}
\BIBentryALTinterwordspacing
Y.~Chen and C.~Suh, ``Spectral mle: Top-k rank aggregation from pairwise
  comparisons,'' in \emph{Proceedings of the 32nd International Conference on
  Machine Learning}, ser. Proceedings of Machine Learning Research, F.~Bach and
  D.~Blei, Eds., vol.~37.\hskip 1em plus 0.5em minus 0.4em\relax Lille, France:
  PMLR, 07--09 Jul 2015, pp. 371--380. [Online]. Available:
  \url{https://proceedings.mlr.press/v37/chena15.html}
\BIBentrySTDinterwordspacing

\bibitem{Chen2017}
X.~Chen, Y.~Chen, and X.~Li, ``Asymptotically optimal sequential design for
  rank aggregation,'' \emph{Math. Oper. Res.}, 10 2017.

\bibitem{ren2018}
\BIBentryALTinterwordspacing
W.~Ren, J.~Liu, and N.~B. Shroff, ``Pac ranking from pairwise and listwise
  queries: Lower bounds and upper bounds,'' 2018. [Online]. Available:
  \url{https://arxiv.org/abs/1806.02970}
\BIBentrySTDinterwordspacing

\bibitem{Jamieson2011}
K.~G. Jamieson and R.~D. Nowak, ``Active ranking using pairwise comparisons,''
  in \emph{Proceedings of the 24th International Conference on Neural
  Information Processing Systems}, ser. NIPS'11.\hskip 1em plus 0.5em minus
  0.4em\relax Red Hook, NY, USA: Curran Associates Inc., 2011, p. 2240–2248.

\bibitem{Ailon-Begleiter2011}
\BIBentryALTinterwordspacing
N.~Ailon, R.~Begleiter, and E.~Ezra, ``A new active learning scheme with
  applications to learning to rank from pairwise preferences,'' 2011. [Online].
  Available: \url{https://arxiv.org/abs/1110.2136v1}
\BIBentrySTDinterwordspacing

\bibitem{Renyi1961}
A.~R{\'e}nyi, ``On a problem of information theory,'' \emph{MTA Mat. Kut. Int.
  Kozl. B}, vol.~6, no. MR143666, pp. 505--516, 1961.

\bibitem{Ulam1976}
S.~Ulam, \emph{Adventures of a mathematician}.\hskip 1em plus 0.5em minus
  0.4em\relax New York, NY, USA: Charles Scribner's Sons, 1976.

\bibitem{Berlekamp1964}
E.~R. Berlekamp, ``Block coding with noiseless feedback,'' {Ph.D.} Thesis, MIT,
  Cambridge, MA, USA, 1964.

\bibitem{Horstein1963}
M.~Horstein, ``Sequential transmission using noiseless feedback,'' \emph{{IEEE}
  Trans. Inf. Theory}, vol.~9, no.~3, pp. 136--143, 1963.

\bibitem{Shayevitz2011}
O.~Shayevitz and M.~Feder, ``Optimal feedback communication via posterior
  matching,'' \emph{{IEEE} Trans. Inf. Theory}, vol.~57, no.~3, pp. 1186--1222,
  2011.

\bibitem{Burnashev1974}
M.~V. Burnashev and K.~Zigangirov, ``An interval estimation problem for
  controlled observations,'' \emph{Problemy Peredachi Informatsii}, vol.~10,
  no.~3, pp. 51--61, 1974.

\bibitem{Burn1976}
M.~V. Burnashev, ``Data transmission over a discrete channel with feedback.
  random transmission time,'' \emph{Problemy Peredachi Informatsii}, vol.~12,
  no.~4, pp. 10--30, 1976.

\bibitem{Ben-Or2008}
M.~Ben-Or and A.~Hassidim, ``The bayesian learner is optimal for noisy binary
  search (and pretty good for quantum as well),'' in \emph{2008 49th Annual
  IEEE Symposium on Foundations of Computer Science}, 2008, pp. 221--230.

\bibitem{Javidi2015}
M.~Naghshvar, T.~Javidi, and M.~Wigger, ``Extrinsic jensen–shannon
  divergence: Applications to variable-length coding,'' \emph{IEEE Transactions
  on Information Theory}, vol.~61, no.~4, pp. 2148--2164, 2015.

\bibitem{castro2008active}
R.~Castro and R.~Nowak, ``Active learning and sampling,'' in \emph{Foundations
  and Applications of Sensor Management}.\hskip 1em plus 0.5em minus
  0.4em\relax Springer, 2008, pp. 177--200.

\bibitem{grimmett2020}
G.~Grimmett and D.~Stirzaker, \emph{Probability and random processes}.\hskip
  1em plus 0.5em minus 0.4em\relax Oxford university press, 2020.

\bibitem{Lehmer1960}
D.~H. Lehmer, ``Teaching combinatorial tricks to a computer,''
  \emph{Combinatorial Analysis}, pp. 179--193, 1960.

\bibitem{Hoeffding1963}
W.~Hoeffding, ``Probability inequalities for sums of bounded random
  variables,'' \emph{Journal of the American Statistical Association}, vol.~58,
  no. 301, pp. 13--30, 1963.

\end{thebibliography}
\newpage
\appendix
\section{Proof of Lemmas}\label{app:pf-lemma}
\subsection{Proof of Lemma~\ref{lem:fl-sorting-error}}\label{app:pf-fl-sorting-error}
To bound the error probability of Algorithm~\ref{alg:proposed-sorting}, we first bound the difference between $\hp$ and $p$. Without loss of generality, we assume $\theta_1>\theta_2$. Therefore, we have $t\sim\mathrm{Binom}(n,p)$. By the Chernoff bound, we have 
$$\P\left(np\left(1-\sqrt{\frac{\log n}{n}}\right)\le t\le np\left(1+\sqrt{\frac{\log n}{n}}\right)\right)\ge 1- 2n^{-p/3}.$$
Let $\mathcal{A}\triangleq\left\{p\left(1-\sqrt{\frac{\log n}{n}}\right)\le\hp\le p\left(1+\sqrt{\tfrac{\log n}{n}}\right)\right\}$.
It follows that 
$$\P\left(\mathcal{A}\right)\ge 1-2n^{-p/3}.$$
On event $\mathcal{A}$, we have $$\frac{\sqrt{\hp}}{\sqrt{\hp}+\sqrt{1-\hp}}=\frac{\sqrt{p}}{\sqrt{p}+\sqrt{1-p}}\left(1+O\left(\sqrt{\tfrac{\log n}{n}}\right)\right).$$
Moreover, because $\frac{\sqrt{p}}{\sqrt{p}+\sqrt{1-p}}>p$ and $\frac{\sqrt{p}}{\sqrt{p}+\sqrt{1-p}}-p=\Theta(1)$ under the assumption that $p\in (0,1/2)$, we know that $\frac{\sqrt{\hp}}{\sqrt{\hp}+\sqrt{1-\hp}}>p$ on event $\mathcal{A}$.
Note that the function $\frac{1-p}{2(1-\alpha)}+\frac{p}{2\alpha}$ is Lipschitz continuous at its saddle point $\frac{\sqrt{\hp}}{\sqrt{\hp}+\sqrt{1-\hp}}$. Therefore, we have 
\begin{equation}
\label{eq:hat-error}
    \frac{1-p}{2(1-\ha)}+\frac{p}{2\ha}=\left(\tfrac12+\sqrt{p(1-p)}\right)\left(1+O\left(\sqrt{\tfrac{\log n}{n}}\right)\right).
\end{equation}
Let $P_e(i)$ denote the probability of error of the $i$th insertion. 
Let $\hat{\Pi}$ denote the output of Algorithm~\ref{alg:proposed-sorting}.
Then, the error probability of the proposed noisy sorting code can be bounded as
\begin{align*}
    \P\{\hat{\Pi} \neq \pi|\mathcal{A}\} &\leq \sum_{i=1}^{n-1} P_e(i)\\
    &\overset{(a)}{\leq} \sum_{i=1}^{n-1} i\left(\left(\tfrac12+\sqrt{p(1-p)}\right)\left(1+O\left(\sqrt{\tfrac{\log n}{n}}\right)\right)\right)^{\floor{\frac{m}{n-1}}}\\
    &\overset{(b)}{\leq} \frac{n(n-1)}{2} \left(\left(\tfrac12+\sqrt{p(1-p)}\right)\left(1+O\left(\sqrt{\tfrac{\log n}{n}}\right)\right)\right)^{\frac{\log n}{R}-1}\\
    &\leq n^2 \exp{\left\{-\frac{\ln n}{R}\left(\log(1/g(p))+\log \left(1+O\left(\sqrt{\tfrac{\log n}{n}}\right)\right)\right)\right\}}\\
    &=\leq n^2 \exp{\left\{-\frac{\ln n}{R}\left(1+O\left(\sqrt{\tfrac{\log n}{n}}\right)\right)\log(1/g(p))\right\}}
\end{align*}
where $g(p) = \frac{1}{2} + \sqrt{p(1-p)}$, $(a)$ follows from Lemma~\ref{thm:fl-bz} and equation~\eqref{eq:hat-error}, and $(b)$ follows by setting $m = \ceil{\frac{n\log n}{R}}$. By setting 
\[
R < \frac{1}{2}\log(1/g(p)),
\]
we have $\P\{\hat{\Pi} \neq \pi|\mathcal{A}\}\le n^{-\Theta(1)}$. Finally, the proof is completed by realizing
\[
\P\{\hat{\Pi} \neq \pi\}\le \P\{\hat{\Pi} \neq \pi|\mathcal{A}\}+\P(\mathcal{A}^c)=o(1).
\]
\subsection{Proof of Lemma~\ref{thm:error-vlbz}}\label{app:pf-error-vlbz}
For each $i\in [n]$ and $k\ge 1$, let $q^*_k(i)=\P\{\td{\theta} \in I_i \cond \td{Y}_1,\ldots,\td{Y}_{k},\td{U}_1,\ldots,\td{U}_{k}\}$ denote the posterior probability of $\td{\theta}\in I_i$ after $k$ rounds of queries. We can see that $q_k(i)=q^*_k(i)$ in the special case of $\hat{p}=p$. Let $l_k\triangleq\arg\max_i \, q_{k}(i)$. It suffices to show that $q^*_k(l_k)\ge q_k(l_k)$ for any $k\ge 1$. Fix an arbitrary $k\ge 1$. For $i\in[n]$, let $a_i$ denote the number of query-response pairs such that $\td{y}_{k} = \mathbbm{1}_{\{j^{*}(k)\leq i\}}$. By the update rule~\eqref{eqn:posterior}, we know that for any $i,j\in[n]$, we have 
    \begin{equation*}
        \frac{q_k(i)}{q_k(j)}=\frac{(1-\hp)^{a_i}\hp^{k-a_i}}{(1-\hp)^{a_j}\hp^{k-a_j}}.
    \end{equation*}
    Since $\hp<1/2$, we know that $a_{l_k}=\max_{i\in[n]}a_i$. Furthermore, because~\eqref{eqn:posterior} guarantees that $\sum_{i=1}^nq_k(i)=1$, we have
    \begin{equation*}
        q_k(i)=\frac{(1-\hp)^{a_i}\hp^{k-a_i}}{\sum_{j=1}^n(1-\hp)^{a_j}\hp^{k-a_j}}.
    \end{equation*}
    Similarly, by the Bayes' Rule, we have
    \begin{equation*}
        q_k^*(i)=\frac{(1-p)^{a_i}p^{k-a_i}}{\sum_{j=1}^n(1-p)^{a_j}p^{k-a_j}}.
    \end{equation*}    
    Let $r_1=\frac{\hp}{p}\ge 1$ and $r_2=\frac{1-\hp}{1-p}\le 1$. We have
    \begin{align*}
        \frac{q_k(l_k)}{q^*_k(l_k)}&=r_1^{k-a_{l_k}}r_2^{a_{l_k}}\frac{\sum_{j=1}^n(1-p)^{a_j}p^{k-a_j}}{\sum_{j=1}^n(1-\hp)^{a_j}\hp^{k-a_j}}\le 1,
    \end{align*}
    where the last inequality follows because $$\frac{(1-p)^{a_j}p^{k-a_j}}{(1-\hp)^{a_j}\hp^{k-a_j}}=\frac{1}{r_1^{k-a_j}r_2^{a_j}}\le \frac{1}{r_1^{k-a_{l_k}}r_2^{a_{l_k}}},$$
    for each $j\in [n]$, and hence $\frac{\sum_{j=1}^n(1-p)^{a_j}p^{k-a_j}}{\sum_{j=1}^n(1-\hp)^{a_j}\hp^{k-a_j}}\le  \frac{1}{r_1^{k-a_{l_k}}r_2^{a_{l_k}}}$. This shows that $q^*_k(l_k)\ge q_k(l_k)$, and completes the proof. 
    \subsection{Proof Lemma~\ref{thm:var-BZ-length}}\label{app:pf-var-BZ-length}
    Recall that $\td{M}$ is the first time step such that there exists $i\in[n]$ with $Z_{\td{M}}(i)\ge -\log P_e$. Let $\td{M}'$ denote the first time step such that $Z_{\td{M}'}(l^*)\ge -\log P_e$. By our definitions, we know that $\td{M}\le \td{M}'$. So it satisfies to give an upper bound for $\E[\td{M}']$. 

Recall that we defined $Z_{k}(l^*)=\log \frac{q_{k}(l^*)}{1-q_{k}(l^*)}$, which characterizes the ratio between the pseudo-posterior of the correct interval and that of the sum of the rest. For the simplicity of notation, we define $\bar{q}_k\triangleq (q_k(1),\ldots,q_k(n))$. In the following, we consider the \emph{expected improvement} $\E[Z_{k}(l^*)\cond \bar{q}_{k-1}]-Z_{k-1}(l^*)$ for each time step $t$. We prove that $$\E[Z_{k}(l^*)\cond \bar{q}_{k-1}]-Z_{k-1}(l^*)\ge (1-H(p))(1+O(\hp-p))$$ at each time step $k$. Then the proof of the lemma is completed by a martingale argument.
To prove the lower bound for the expected improvement, we fix a positive integer $k\ge 1$. Recall that we defined the median of the pseudo-posterior $j(k)$ as
\begin{equation*}
    \sum_{i=1}^{j(k)-1}q_{k-1}(i) \leq \frac{1}{2} \quad \text{and} \quad \sum_{i=1}^{j(k)}q_{k-1}(i) > \frac{1}{2}.
\end{equation*}
We consider three different cases: (1) $l^*=j(k)$, (2) $l^*>j(k)$, and (3) $l^*<j(k)$. Let $h_1$, $h_2$ and $h_3$ denote the expected improvement for these three cases respectively.
We first consider the case of $l^*=j(k)$. By taking the randomness over the choice of the query point and the noise in the observation model, we can compute
\begin{align*}
    h_1&=\pi_1\left((1-p)\log\frac{2(1-q_{k-1}(l^*))(1-\hp)}{1+(1-2\hp)\tau_1(k-1)-q_{k-1}(l^*)2(1-\hp)}+p\log\frac{2(1-q_{k-1}(l^*))\hp}{1+(2\hp-1)\tau_1(k-1)-q_{k-1}(l^*)2\hp}\right)\\
    &+\pi_2\left((1-p)\log\frac{2(1-q_{k-1}(l^*))(1-\hp)}{1+(1-2\hp)\tau_2(k-1)-q_{k-1}(l^*)2(1-\hp)}+p\log\frac{2(1-q_{k-1}(l^*))\hp}{1+(2\hp-1)\tau_2(k-1)-q_{k-1}(l^*)2\hp}\right).
\end{align*}
Let 
\begin{align*}
    h_1'&\triangleq \pi_1\left((1-p)\log\frac{2(1-q_{k-1}(l^*))(1-p)}{1+(1-2p)\tau_1(k-1)-q_{k-1}(l^*)2(1-p)}+p\log\frac{2(1-q_{k-1}(l^*))p}{1+(2p-1)\tau_1(k-1)-q_{k-1}(l^*)2p}\right)\\
    &+\pi_2\left((1-p)\log\frac{2(1-q_{k-1}(l^*))(1-p)}{1+(1-2p)\tau_2(k-1)-q_{k-1}(l^*)2(1-p)}+p\log\frac{2(1-q_{k-1}(l^*))p}{1+(2p-1)\tau_2(k-1)-q_{k-1}(l^*)2p}\right).
\end{align*}
We are going to see that $h_1=h_1'+O(\gamma)$. Let $\lambda=1-q_{k-1}(l^*)$. Notice that
\begin{align}
    &|1+(1-2\hp)\tau_1(k-1)-q_{k-1}(l^*)2(1-\hp)-(1+(1-2p)\tau_1(k-1)-q_{k-1}(l^*)2(1-p))|\nonumber\\
    &=|-2\gamma\tau_1(k-1)+2\gamma q_{k-1}(l^*)|\nonumber\\
    &\stackrel{(a)}{\le}2\gamma\lambda,\label{eq:vlbz-length-1}
\end{align}
where (a) follows because $1-2\lambda=q_{k-1}(l^*)-(1-q_{k-1}(l^*))\le\tau_1(k-1)\le 1$ and $q_{k-1}(l^*)=1-\lambda$. Moreover, notice that
\begin{align}
    &1+(1-2p)\tau_1(k-1)-q_{k-1}(l^*)2(1-p)\nonumber\\
    &\ge 1+(1-2p)(2q_{k-1}(l^*)-1)-q_{k-1}(l^*)2(1-p)\nonumber\\
    &=2p-2pq_{k-1}(l^*)=2p\lambda.\label{eq:vlbz-length-2}
\end{align}
Equations \eqref{eq:vlbz-length-1} and \eqref{eq:vlbz-length-2} together implies that
$$1+(1-2\hp)\tau_1(k-1)-q_{k-1}(l^*)2(1-\hp)=(1+O(\gamma))(1+(1-2p)\tau_1(k-1)-q_{k-1}(l^*)2(1-p)).$$
It follows that
\begin{align}
    \log\frac{2(1-q_{k-1}(l^*))(1-\hp)}{1+(1-2\hp)\tau_1(k-1)-q_{k-1}(l^*)2(1-\hp)}&=\log\frac{2(1-q_{k-1}(l^*))(1-p)}{1+(1-2p)\tau_1(k-1)-q_{k-1}(l^*)2(1-p)}+\log(1+O(\gamma))\nonumber\\
    &=\log\frac{2(1-q_{k-1}(l^*))(1-p)}{1+(1-2p)\tau_1(k-1)-q_{k-1}(l^*)2(1-p)}+O(\gamma).\label{eq:case1-1}
\end{align}
Similarly, we can show that 
\begin{align}
    \log\frac{2(1-q_{k-1}(l^*))\hp}{1+(2\hp-1)\tau_1(k-1)-q_{k-1}(l^*)2\hp}&=\log\frac{2(1-q_{k-1}(l^*))p}{1+(2p-1)\tau_1(k-1)-q_{k-1}(l^*)2p}+O(\gamma);\label{eq:case1-2}\\
    \log\frac{2(1-q_{k-1}(l^*))(1-\hp)}{1+(1-2\hp)\tau_2(k-1)-q_{k-1}(l^*)2(1-\hp)}&=
    \log\frac{2(1-q_{k-1}(l^*))(1-p)}{1+(1-2p)\tau_2(k-1)-q_{k-1}(l^*)2(1-p)}+O(\gamma);\label{eq:case1-3}\\
    \log\frac{2(1-q_{k-1}(l^*))\hp}{1+(2\hp-1)\tau_2(k-1)-q_{k-1}(l^*)2\hp}&=\log\frac{2(1-q_{k-1}(l^*))p}{1+(2p-1)\tau_2(k-1)-q_{k-1}(l^*)2p}+O(\gamma).\label{eq:case1-4}
\end{align}
Equations~\eqref{eq:case1-1}-\eqref{eq:case1-4} together imply that $h_1=h_1'+O(\gamma)$.

In the cases of $l^*>j(k)$ and $l^*<j(k)$, we can go through a similar algebraic procedure to show that $h_2=h_2'+O(\gamma)$ and $h_3=h_3'+O(\gamma)$, where
\begin{align*}
    h_2'&\triangleq \pi_1\left((1-p)\log\frac{2(1-q_{k-1}(l^*))(1-p)}{1+(1-2p)\tau_1(k-1)-q_{k-1}(l^*)2(1-p)}+p\log\frac{2(1-q_{k-1}(l^*))p}{1+(2p-1)\tau_1(k-1)-q_{k-1}(l^*)2p}\right)\\
    &+\pi_2\left((1-p)\log\frac{2(1-q_{k-1}(l^*))(1-p)}{1+(2p-1)\tau_2(k-1)-q_{k-1}(l^*)2(1-p)}+p\log\frac{2(1-q_{k-1}(l^*))p}{1+(1-2p)\tau_2(k-1)-q_{k-1}(l^*)2p}\right), 
\end{align*}
and
\begin{align*}
    h_3'&\triangleq \pi_1\left((1-p)\log\frac{2(1-q_{k-1}(l^*))(1-p)}{1+(2p-1)\tau_1(k-1)-q_{k-1}(l^*)2(1-p)}+p\log\frac{2(1-q_{k-1}(l^*))p}{1+(1-2p)\tau_1(k-1)-q_{k-1}(l^*)2p}\right)\\
    &+\pi_2\left((1-p)\log\frac{2(1-q_{k-1}(l^*))(1-p)}{1+(1-2p)\tau_2(k-1)-q_{k-1}(l^*)2(1-p)}+p\log\frac{2(1-q_{k-1}(l^*))p}{1+(2p-1)\tau_2(k-1)-q_{k-1}(l^*)2p}\right).
\end{align*}

Recall that we defined 
\begin{equation*}
    \rho(x)\triangleq (1-\hp)\log (1+(1-2\hp)x)+\hp\log (1-(1-2\hp)x).
    \end{equation*}
By some algebra, we get 
\begin{equation}
    h_1'=1-H(p)-\pi_1 \rho\left(\frac{\tau_1(k-1)-q_{k-1}(l^*)}{1-q_{k-1}(l^*)}\right)-\pi_2 \rho\left(\frac{\tau_2(k-1)-q_{k-1}(l^*)}{1-q_{k-1}(l^*)}\right),\label{eq:h1_eq}
\end{equation}
\begin{equation}
    h_2'=1-H(p)-\pi_1 \rho\left(\frac{\tau_1(k-1)-q_{k-1}(l^*)}{1-q_{k-1}(l^*)}\right)-\pi_2 \rho\left(-\frac{\tau_2(k-1)+q_{k-1}(l^*)}{1-q_{k-1}(l^*)}\right),\label{eq:h2_eq}
\end{equation}
and
\begin{equation}
    h_3'=1-H(p)-\pi_1 \rho\left(-\frac{\tau_1(k-1)+q_{k-1}(l^*)}{1-q_{k-1}(l^*)}\right)-\pi_2 \rho\left(\frac{\tau_2(k-1)-q_{k-1}(l^*)}{1-q_{k-1}(l^*)}\right).\label{eq:h3_eq}
\end{equation}
We are going to show that $h_1'\ge 1-H(p)$, $h_2'\ge 1-H(p)$ and $h_3'\ge 1-H(p)$.
Consider first cases (2) and (3). Notice that $\frac{y-x}{1-x}\le y$ and $-\frac{y+x}{1-x}\le -y$ for any $0<x<1$ and $-1 \leq y \leq 1$. By the fact that $\rho(x)$ is non-decreasing for $-1\le x\le 1$, we have 
\begin{align}
    h_2'&\ge 1-H(p)-\pi_1 \rho\left(\tau_1(k-1)\right)-\pi_2 \rho\left(-\tau_2(k-1)\right);\label{eq:h2_bound}\\
    h_3'&\ge 1-H(p)-\pi_1 \rho\left(-\tau_1(k-1)\right)-\pi_2 \rho\left(\tau_2(k-1)\right).\label{eq:h3_bound}
\end{align}
Recall that we set $\pi_1=\frac{\rho(\tau_2)-\rho(-\tau_2)}{\rho(\tau_1)-\rho(-\tau_1)+\rho(\tau_2)-\rho(-\tau_2)}$, which implies that 
\begin{equation}
\label{eq:h2_h3_eq}
\pi_1 \rho\left(\tau_1(k-1)\right)+\pi_2 \rho\left(-\tau_2(k-1)\right)=\pi_1 \rho\left(-\tau_1(k-1)\right)+\pi_2 \rho\left(\tau_2(k-1)\right).
\end{equation}
By equations~\eqref{eq:h2_bound},~\eqref{eq:h3_bound} and~\eqref{eq:h2_h3_eq}, we have
\begin{align*}
    h_2'&\ge 1-H(p)-\frac12(\pi_1 \rho\left(\tau_1(k-1)\right)+\pi_2 \rho\left(-\tau_2(k-1)\right)+\pi_1 \rho\left(-\tau_1(k-1)\right)+\pi_2 \rho\left(\tau_2(k-1)\right))\\
    &\ge 1-H(p)-\frac12(\pi_1(\rho(\tau_1(k-1))+\rho(-\tau_1(k-1)))+\pi_2(\rho(\tau_2(k-1))+\rho(-\tau_2(k-1))))\\
    &\stackrel{(a)}{\ge} 1-H(p)-\frac12\left(\max_{-1\le x\le 1}\rho(x)+\rho(-x)\right),
\end{align*}
where (a) follows because $\pi_1+\pi_2=1$. But $\frac12\left(\max_{-1\le x\le 1}\rho(x)+\rho(-x)\right)\le 0$, because
\begin{align*}
    \rho(x)+\rho(-x)&=(1-p)\log(1+(1-2p)x)+p\log(1-(1-2p)x)+(1-p)\log(1-(1-2p)x)+p\log(1+(1-2p)x)\\
    &=\log(1+(1-2p)x)+\log(1-(1-2p)x)\\
    &=\log(1-(1-2p)^2x^2)\le 0.
\end{align*}
This shows that $h_2'\ge 1-H(p)$, and we can get $h_3'\ge 1-H(p)$ in the same way.

Now, we consider case (1). Using equation~\eqref{eq:h1_eq} and the fact that $\tau_1+\tau_2=2q_{k-1}(l^*)$, we have
\begin{equation}
\label{eq:h1_eq_2}
    h_1'=1-H(p)-\pi_1 \rho\left(\frac{\tau_1(k-1)-q_{k-1}(l^*)}{1-q_{k-1}(l^*)}\right)-\pi_2 \rho\left(-\frac{\tau_1(k-1)-q_{k-1}(l^*)}{1-q_{k-1}(l^*)}\right).
\end{equation}
For simplicity, we denote the term $\frac{\tau_1(k-1)-q_{k-1}(l^*)}{1-q_{k-1}(l^*)}$ by $T$. Here we want to show that 
\begin{equation}\label{eq:case1_cond}
\pi_1\rho(T)+\pi_2\rho(-T)\le 0.
\end{equation}
Towards that goal, we consider two sub-cases $\tau_1\le\tau_2$ and $\tau_1\ge\tau_2$. In the first case, we have $\pi_1\ge \pi_2$, because $\rho(x)-\rho(-x)$ is non-decreasing for $0<x<1$. Let $\pi_1=\frac12+\epsilon$ and $\pi_2=\frac12-\epsilon$ for some positive constant $\epsilon$. Moreover, we have $T\le 0$ because $\tau_1(k-1)+\tau_2(k-1)=2q_{k-1}(l^*)$ and $\tau_1\le\tau_2$. It follows that 
\begin{align*}
    &\pi_1\rho(T)+\pi_2\rho(-T)\\
    &=(\tfrac12+\epsilon)((1-p)\log(1+(1-2p)T)+p\log(1-(1-2p)T))\\&+(\tfrac12-\epsilon)((1-p)\log(1-(1-2p)T)+p\log(1+(1-2p)T))\\
    &=\tfrac12\log(1-(1-2p)^2T^2)+\epsilon((1-2p)\log(1+(1-2p)T)+(2p-1)\log(1-(1-2p)T)).
\end{align*}
Our claim~\eqref{eq:case1_cond} holds because $\log(1-(1-2p)^2T^2)\le 0$, $(1-2p)\log(1+(1-2p)T)\le 0$ and $(2p-1)\log(1-(1-2p)T)\le 0$. The proof for the second sub-case follows similarly.

Therefore, in all three cases, we can lower bound the expected improvement $\E[Z_{k}(l^*)\cond \Bar{q}_{k-1}]-Z_{k-1}(l^*)\ge 1-H(p)+O(\gamma)=(1-H(p))(1+O(\gamma))$, where the last equality follows because $1-H(p)=\Theta(1)$. Now, we state a lemma to relate the expected improvement to the total expected number of queries. 
\begin{lemma}\label{lem:submart_stop}
Let $X_0,X_1,X_2,\ldots$ be a discrete-time submartingale with respect to $\sigma$-algebra filtration $\mathcal{F}_0,\mathcal{F}_1,\mathcal{F}_2,\ldots$. For $k=0,1,2,\ldots$, suppose
\begin{equation}
    \E[X_{k+1}\cond\mathcal{F}_k]\ge X_{k}+C
\end{equation}
for some $C>0$, and
\begin{equation}
    |X_{k+1}-X_{k}|\le A
\end{equation}
for some finite constant $A$.
Define a stopping-time $\tau$ by $\tau\triangleq\min\{k:X_k\ge B\}$. Then the expectation of $\tau$ satisfies inequality
\begin{equation}
    \E[\tau\cond\mathcal{F}_0]\le\frac{B-X_0+A}{C}.
\end{equation}
\end{lemma}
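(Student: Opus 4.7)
The plan is to reduce the claim to the optional stopping theorem applied to a drift-corrected auxiliary process. First I would introduce $M_k \triangleq X_k - kC$ and verify that $\{M_k\}$ is a submartingale with respect to $\{\mathcal{F}_k\}$: by the hypothesis $\E[X_{k+1} \mid \mathcal{F}_k] \ge X_k + C$, one has
\[
\E[M_{k+1} \mid \mathcal{F}_k] \;=\; \E[X_{k+1} \mid \mathcal{F}_k] - (k+1)C \;\ge\; X_k - kC \;=\; M_k.
\]

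Since $\tau$ is a priori unbounded, I would not apply optional stopping to $\tau$ directly. Instead I would use the truncation $\tau_n \triangleq \min(\tau, n)$, which is a bounded stopping time. The optional stopping theorem gives
\[
\E[M_{\tau_n} \mid \mathcal{F}_0] \;\ge\; M_0 \;=\; X_0,
\]
equivalently
\[
\E[X_{\tau_n} \mid \mathcal{F}_0] \;\ge\; X_0 + C\,\E[\tau_n \mid \mathcal{F}_0].
\]

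Next I would bound $X_{\tau_n}$ from above using the bounded-increments hypothesis $|X_{k+1}-X_k|\le A$. On $\{\tau \le n\}$ the definition of $\tau$ gives $X_{\tau-1} < B$, so $X_\tau \le X_{\tau-1} + A < B + A$; on $\{\tau > n\}$, $\tau_n = n$ and $X_n < B < B + A$. In either case $X_{\tau_n} < B + A$, hence $\E[X_{\tau_n} \mid \mathcal{F}_0] \le B + A$. Combining,
\[
C\,\E[\tau_n \mid \mathcal{F}_0] \;\le\; B + A - X_0.
\]
Since $\tau_n \uparrow \tau$ as $n \to \infty$, the monotone convergence theorem yields $\E[\tau \mid \mathcal{F}_0] \le (B - X_0 + A)/C$, which is the desired conclusion; finiteness of $\E[\tau \mid \mathcal{F}_0]$ also ensures $\tau < \infty$ almost surely, so the statement is non-vacuous.

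The only delicate step is the overshoot control at the stopping boundary, which is precisely where the bounded-increment assumption enters and produces the additive $A$ on the right-hand side. Apart from this, the argument is the standard Wald-type optional-stopping calculation for a submartingale with linear drift, and I do not foresee any deeper obstacle.
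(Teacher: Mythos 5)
Your proof is correct and follows essentially the same route as the paper: the drift-corrected process $X_k - kC$ is a submartingale, optional stopping is applied at the truncated time $k\wedge\tau$, the stopped value is bounded by $B+A$ via the bounded-increment overshoot argument, and the limit is taken by monotone convergence. Your version is, if anything, slightly more explicit about the case split at the truncation and the limiting step.
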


We first use Lemma~\ref{lem:submart_stop} to complete the proof of Lemma~\ref{thm:var-BZ-length} and then present the proof of the Lemma in the next subsection. In our case, we have that $\E[Z_{k}(l^*)\cond \Bar{q}_{k-1}]-Z_{k-1}(l^*)\ge (1-H(\hp))(1+O(\hp-p))$. Clearly, $Z_0(l^*),Z_1(l^*),Z_2(l^*),\ldots$ form a submartingale with respect to the sequence of $\sigma$-algebras induced by $\Bar{q}_0,\Bar{q}_1,\Bar{q}_2,\ldots$. To apply Lemma~\ref{lem:submart_stop}, we still need to bound the difference $|Z_{k-1}(l^*)-Z_{k}(l^*)|$ for each $k$. We claim that $$|Z_{k-1}(l^*)-Z_{k}(l^*)|\le \log\frac{1-\hp}{\hp}.$$
To see the claim, we consider two different cases: (1) $l^*>j^*(k)$ and (2) $l^*\le j^*(k)$, where $j^*(k)$ is randomly chosen from $j(k)-1$ and $j(k)$ as in~\eqref{eqn:jstar-vl}.

In the first case, recall that $\tau(k) = 2\sum_{i=1}^{j^{*}(k)}q_{k-1}(i) - 1.$ By the update rule, we know that the pseudo-posterior of the correct interval increases when $\td{Y}_k=1$, and it is updated as $q_k(l^*)=q_{k-1}(l^*)\frac{2(1-\hp)}{1+(2\hp-1)\tau(k)}$. When $\td{Y}_k=0$, the pseudo-posterior of the correct interval decreases, and it is updated as $q_k(l^*)=q_{k-1}(l^*)\frac{2\hp}{1+(1-2\hp)\tau(k)}$. If $q_k(l^*)=q_{k-1}(l^*)\frac{2(1-\hp)}{1+(2\hp-1)\tau(k)}$, we have
\begin{align}
    Z_{k}(l^*)-Z_{k-1}(l^*)-\log\frac{1-\hp}{\hp}&=\log\frac{q_{k-1}(l^*)\frac{2(1-\hp)}{1+(2\hp-1)\tau(k)}}{1-q_{k-1}(l^*)\frac{2(1-\hp)}{1+(2\hp-1)\tau(k)}}-\log\frac{q_{k-1}(l^*)}{1-q_{k-1}(l^*)}-\log\frac{1-\hp}{\hp}\nonumber\\
    &=\log\frac{(1-q_{k-1}(l^*))\frac{2(1-\hp)}{1+(2\hp-1)\tau(k)}}{1-q_{k-1}(l^*)\frac{2(1-\hp)}{1+(2\hp-1)\tau(k)}}-\log\frac{1-\hp}{\hp}\nonumber\\
    &\stackrel{(a)}{\le} \log\frac{2(1-\hp)}{1+(2\hp-1)\tau(k)}-\log\frac{1-\hp}{\hp}\nonumber\\
    &=\log\frac{2\hp}{1+(2\hp-1)\tau(k)}\nonumber\\
    &=\log\left(1+\frac{2\hp-(1+(2\hp-1)\tau(k))}{1+(2\hp-1)\tau(k)}\right)\nonumber\\
    &=\log\left(1+\frac{(2\hp-1)(1-\tau(k))}{1+(2\hp-1)\tau(k)}\right)\stackrel{(b)}{\le} 0,\label{eq:incr_case}
\end{align}
where (a) follows since $\frac{y-x}{1-x}\le y$ for any $0<x<1$ and $y\leq 1$, and (b) follows since $2\hp-1<0$ and $1-\tau(k)>0$. 
Similarly, if $q_k(l^*)=q_{k-1}(l^*)\frac{2\hp}{1+(1-2\hp)\tau(k)}$, we have 
\begin{align}
    Z_{k}(l^*)-Z_{k-1}(l^*)+\log\frac{1-\hp}{\hp}&=\log\frac{q_{k-1}(l^*)\frac{2\hp}{1+(1-2\hp)\tau(k)}}{1-q_{k-1}(l^*)\frac{2\hp}{1+(1-2\hp)\tau(k)}}-\log\frac{q_{k-1}(l^*)}{1-q_{k-1}(l^*)}+\log\frac{1-\hp}{\hp}\nonumber\\
    &=\log\frac{(1-q_{k-1}(l^*))\frac{2\hp}{1+(1-2\hp)\tau(k)}}{1-q_{k-1}(l^*)\frac{2\hp}{1+(1-2\hp)\tau(k)}}+\log\frac{1-\hp}{\hp}\nonumber\\
    &=\log\frac{(1-q_{k-1}(l^*))\frac{2(1-\hp)}{1+(1-2\hp)\tau(k)}}{1-q_{k-1}(l^*)\frac{2\hp}{1+(1-2\hp)\tau(k)}}\nonumber\\
    &=\log\left(\frac{2(1-\hp)-2(1-\hp)q_{k-1}(l^*)}{1+(1-2\hp)\tau(k)-2\hp q_{k-1}(l^*)}\right)\nonumber\\
    &=\log\left(1+\frac{2(1-\hp)-2(1-\hp)q_{k-1}(l^*)-1-(1-2\hp)\tau(k)+2\hp q_{k-1}(l^*)}{1+(1-2\hp)\tau(k)-2\hp q_{k-1}(l^*)}\right)\nonumber\\
    &=\log\left(1+\frac{(1-2\hp)(1-\tau(k)-2q_{k-1}(l^*))}{1+(1-2\hp)\tau(k)-2\hp q_{k-1}(l^*)}\right)\nonumber\\
    &\stackrel{(a)}{\ge} 0,\label{eq:decr_case}
\end{align}
where (a) follows since $1-\tau(k)=2\sum_{i=j^*(k)+1}^nq_{k-1}(i)$ and $l^*\ge j^*(k)+1$. By~\eqref{eq:incr_case} and~\eqref{eq:decr_case}, we know that $| Z_{k}(l^*)-Z_{k-1}(l^*)|\le \log \frac{1-\hp}{\hp}$ in this case. The proof for case (2) goes in the same way as case (1) and it is skipped. 

Because $Z_0(1)=\cdots=Z_0(n)=\log \frac{1}{n-1}$, by Lemma~\ref{lem:submart_stop}, we have 
\begin{align*}
    \E[\td{M}\cond l^*=i]&\le \frac{-\log P_e-\log\frac{1}{n-1}+\log\frac{1-\hp}{\hp}}{(1-H(p))(1+O(\gamma))}\\
    &\le \frac{-\log P_e+\log n+\log\frac{1-\hp}{\hp}}{1-H(p)}(1+O(\gamma)),
\end{align*}
which completes the proof.
\subsection{Proof of Lemma~\ref{lem:submart_stop}}\label{app:pf-submart_stop}
For $k=0,1,\ldots$, let $Y_k\triangleq X_k-kC$. We claim that $Y_0,Y_1,\ldots,$ is also a submartingale with respect to $\sigma$-algebra filtration $\mathcal{F}_0,\mathcal{F}_1,\ldots$. This is because
\begin{align*}
    \E[Y_{k+1}\cond\mathcal{F}_k]&=\E\left[X_{k+1}-kC-C\cond\mathcal{F}_k\right]\\
    &\ge X_k+C-kC-C\\
    &=Y_k.
\end{align*}
Because $\tau$ is a stopping time with respect to $\mathcal{F}_0,\mathcal{F}_1,\ldots$, by optional stopping time theorem (see for example~\cite{grimmett2020}), we know that $Y_{k\wedge \tau}$ also forms a submartingale, where $\wedge$ denotes the minimum operator. We have that
\begin{align*}
    Y_0&\stackrel{(a)}{\le} \lim_{k\rightarrow\infty}\E[Y_{k\wedge\tau}\cond\mathcal{F}_0]\\
    &=\lim_{k\rightarrow\infty}\E[X_{k\wedge\tau}-C(k\wedge\tau)\cond\mathcal{F}_0]\\
    &\le \lim_{k\rightarrow\infty}\E[X_{k\wedge\tau}\cond\mathcal{F}_0]-C\E[\tau\cond\mathcal{F}_0]\\
    &\stackrel{(b)}{\le}B+A-C\E[\tau\cond\mathcal{F}_0],
\end{align*}
where (a) follows because $Y_{k\wedge\tau}$ forms a submartingale, and (b) follows because $X_{(k\wedge\tau)-1}\le B$ and $|X_{(k\wedge\tau)-1}-X_{k\wedge\tau}|\le A$. The proof is then completed by realizing that $Y_0=X_0$.

\subsection{Proof of Lemma~\ref{lem:vl-sorting-queries}}\label{app:pf-vl-sorting-queries}
Let $\mathcal{A}\triangleq\left\{p\le\hp\le p\left(1+3\sqrt{\tfrac{\log n}{n}}\right)\right\}$.
From the proof of Lemma~\ref{lem:fl-sorting-error},
we have $$\P\left(np\left(1-\sqrt{\frac{\log n}{n}}\right)\le t\le np\left(1+\sqrt{\frac{\log n}{n}}\right)\right)\ge 1- 2n^{-p/3},$$ which implies that $\P(\mathcal{A}^c)\le 2n^{-p/3}$.
Let $M_\rmins$ denote the number of queries spent on the insertion steps of Algorithm~\ref{alg:proposed-vlsorting}. Let $M_i$ denote the number of queries at $i$th insertion. By the law of total expectation, we have
\[
\E[M_\rmins]=\E[M_\rmins|\mathcal{A}]\P(\mathcal{A})+\E[M_\rmins|\mathcal{A}^c]\P(\mathcal{A}^c)\le \E[M_\rmins|\mathcal{A}]+\E[M_\rmins|\mathcal{A}^c]\P(\mathcal{A}^c).
\]
Because the algorithm always makes at most $n(\log n)^2$ queries, we have $\E[M_\rmins|\mathcal{A}^c]\P(\mathcal{A}^c)=O(n(\log n)^2n^{-p/3})=o(n)$. Furthermore, by Lemma~\ref{thm:var-BZ-length}, we have
\begin{align}
    \E[M_\rmins|\mathcal{A}]&=\sum_{i=1}^{n-1}\E[M_i|\mathcal{A}]\nonumber\\
    &\le \sum_{i=1}^{n-1}\frac{\log(i+1)-\log\td{P}_e(i)+\log\frac{1-\hp}{\hp}}{1-H(p)}(1+O(\gamma))\nonumber\\
    &\le\frac{\log(n!)+n\log n+n\log\log n+n\log\frac{1-\hp}{\hp}}{1-H(p)}(1+O(\gamma))\nonumber\\
    &\stackrel{(a)}{=}\frac{2n\log n+O(n\log\log n)}{1-H(p)},\label{eq:ins-queries}
\end{align}
where (a) follows by Stirling's approximation and $\gamma=O(\sqrt{\log n /n})$. Therefore, we have $\E[M_\rmins]\le\frac{2n\log n+O(n\log\log n)}{1-H(p)}$, and it follows that the total number queries $M$ satisfies $\lim_{n\to\infty}\frac{n\log n}{\E[M]}=\lim_{n\to\infty}\frac{n\log n}{\E[M_\rmins+n]}\ge \frac{1-H(p)}{2}$.

\subsection{Proof of Lemma~\ref{lem:vl-sorting-error}}\label{app:pf-vl-sorting-error}
As we mentioned in Remark~\ref{rem:reduction}, we can assume without loss of generality that $\pi$ follows a uniform prior. In this proof, for the clarity of notation, we use random variable $\Pi$ to denote the uniformly distributed true permutation. Let $\hat{\Pi}$ denote the output of Algorithm~\ref{alg:proposed-vlsorting}.
Notice that $\P(\hat{\Pi}=\Pi)\ge \P(\hat{\Pi}=\Pi,\mathcal{A})=\P(\hat{\Pi}=\Pi|\mathcal{A})\P(\mathcal{A})$. Since we know that $\P(\mathcal{A}^c)\le 2n^{-p/3}$, it suffices to show that $\P(\hat{\Pi}=\Pi|\mathcal{A})=1-o(1)$. To bound $\P(\hat{\Pi}=\Pi|\mathcal{A})$, let us assume there is an auxiliary algorithm, which is identical to Algorithm~\ref{alg:proposed-vlsorting}, 
except that the termination conditions presented on lines 5-6 in Algorithm~\ref{alg:proposed-vlsorting} are removed.
Comparing Algorithm~\ref{alg:proposed-vlsorting} and the auxiliary algorithm, their query strategies and estimation rules are same, while Algorithm~\ref{alg:proposed-vlsorting} has the additional restriction that the total number of queries is at most $n(\log n)^2$. Therefore the probability that Algorithm~\ref{alg:proposed-vlsorting} gives the correct output equals to the probability that the auxiliary algorithm gives the correct output \emph{and} the total number of queries is at most $n(\log n)^2$. Let $\Pi'$ denote the output of the auxiliary algorithm, and let $N_\rmins$ denote the number of queries it spends on the insertions steps. Let $\mathcal{B}\triangleq\{N_\rmins\le n(\log n)^2-n\}$. Because the algorithm spends $n$ queries on estimating $p$, we have
$$\P(\hat{\Pi}=\Pi|\mathcal{A})=\P(\Pi'=\Pi,\mathcal{B}|\mathcal{A})\ge 1-\P(\Pi'\neq\Pi|\mathcal{A})-\P(\mathcal{B}^c|\mathcal{A}),$$
where the last inequality follows from the union bound.
From~\eqref{eq:ins-queries}, we know that $\E[N_\rmins|\mathcal{A}]=\Theta(n\log n)$. Therefore, by Markov's inequality, we have $\P(\mathcal{B}^c|\mathcal{A})=O(1/\log n)$. Now, we move on to bound $\P(\Pi'=\Pi|\mathcal{A})$.
For each insertion step, we have already bounded its error probability under the assumption of a uniform prior for its correct position in Lemma~\ref{thm:error-vlbz}. Here we still need to argue that given the uniform prior on the permutation $\Pi$, the true position for each insertion step follows a uniform prior. Towards that goal, we introduce an alternative representation of a permutation proposed in~\cite{Lehmer1960} known as the Lehmer code, and utilize two of its properties from~\cite{Lehmer1960}.

\begin{definition}
For a permutation $\pi\in\mathcal{S}_n$, the Lehmer code is an $n$-tuple $L(\pi)=(L(\pi)_1,\ldots,L(\pi)_n)$, where $L(\pi)_i=|\{j>i:\pi(j)<\pi(i)\}|$.
\end{definition}
\begin{lemma}
The Lehmer code defines a bijection from the set of all permutations $\mathcal{S}_n$ to the Cartesian product space $([n-1]\cup\{0\})\times([n-2]\cup\{0\})\times\cdots\times \{0\} $.
\end{lemma}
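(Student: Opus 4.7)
The plan is to prove the lemma by constructing an explicit inverse to $L$, exploiting the fact that both sets have cardinality $n!$, so it suffices to exhibit either a right or left inverse.

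First I would verify that $L$ is well-defined as a map into the stated codomain. For each $i\in[n]$, the set $\{j:j>i\}$ has exactly $n-i$ elements, so $L(\pi)_i$ is a nonnegative integer at most $n-i$, i.e., $L(\pi)_i\in[n-i]\cup\{0\}$ (with the convention $[0]=\emptyset$ so that the last coordinate is $0$). Then I would note $|\mathcal{S}_n|=n!=\prod_{i=1}^{n}(n-i+1)=\left|\prod_{i=1}^{n}([n-i]\cup\{0\})\right|$, so injectivity and surjectivity are equivalent.

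Next I would construct the inverse $\Phi:\prod_{i=1}^{n}([n-i]\cup\{0\})\to\mathcal{S}_n$ explicitly. Given a tuple $(a_1,\ldots,a_n)$, initialize $S_1=[n]$ and for $i=1,2,\ldots,n$ let $\pi(i)$ be the $(a_i+1)$-th smallest element of $S_i$, then set $S_{i+1}=S_i\setminus\{\pi(i)\}$. This is well-defined because $|S_i|=n-i+1\ge a_i+1$. A straightforward induction shows that for every $i$, the set $\{\pi(j):j>i\}$ equals $S_{i+1}=S_i\setminus\{\pi(i)\}$. Consequently, the number of $j>i$ with $\pi(j)<\pi(i)$ equals the number of elements of $S_i$ strictly smaller than $\pi(i)$, which by construction of $\pi(i)$ as the $(a_i+1)$-th smallest element of $S_i$ is exactly $a_i$. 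Hence $L(\Phi(a_1,\ldots,a_n))=(a_1,\ldots,a_n)$, which establishes surjectivity of $L$ and, by the cardinality match, bijectivity.

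No serious obstacle is expected: the argument is a standard factorial-base encoding verification, and the only care needed is the bookkeeping step that identifies $\{\pi(j):j>i\}$ with $S_{i+1}$ inside the inductive check. Everything else is immediate from the definitions.
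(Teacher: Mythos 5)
Your proof is correct. The paper itself states this lemma without proof, simply treating it as a standard property of the Lehmer code (citing Lehmer), so there is no paper argument to compare against; your construction of the explicit inverse $\Phi$ via the ``$(a_i+1)$-th smallest remaining element'' rule, together with the cardinality count $|\mathcal{S}_n|=n!=\prod_{i=1}^{n}(n-i+1)$, is exactly the standard factorial-number-system argument and all steps check out (the key identity $\{\pi(j):j>i\}=S_i\setminus\{\pi(i)\}$ holds because $S_i=[n]\setminus\{\pi(1),\ldots,\pi(i-1)\}$ and the constructed $\pi$ is injective by construction, hence a genuine permutation).
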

\begin{lemma}\label{lem:Lehmer-indep}
Suppose $\Pi\sim\mathrm{Unif}(\mathcal{S}_n)$. Then the $L(\Pi)_i\sim\mathrm{Unif}\{0,\ldots,n-i\}$ for each $i\in [n]$. Furthermore, the entries $L(\Pi)_1,\ldots,L(\Pi)_n$ are mutually independent. 
\end{lemma}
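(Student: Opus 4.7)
The plan is to deduce both claims of the lemma directly from the bijection property of the Lehmer code established in the preceding lemma. The key observation is that if a finite set $\mathcal{A}$ is in bijection with a Cartesian product $\mathcal{B}_1 \times \cdots \times \mathcal{B}_n$ of finite sets with $|\mathcal{A}| = \prod_i |\mathcal{B}_i|$, then transporting the uniform distribution on $\mathcal{A}$ through the bijection yields the uniform distribution on the product, which in turn factorizes as the product of uniform marginals on each factor.

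To carry this out, first I would invoke the preceding lemma to conclude that the Lehmer map $L: \mathcal{S}_n \to \mathcal{C}$, where $\mathcal{C} \triangleq ([n-1]\cup\{0\})\times([n-2]\cup\{0\})\times\cdots\times \{0\}$, is a bijection. Counting cardinalities gives $|\mathcal{C}| = \prod_{i=1}^n (n-i+1) = n! = |\mathcal{S}_n|$. Since $\Pi$ is uniform on $\mathcal{S}_n$ and $L$ is a bijection, for every tuple $(c_1,\ldots,c_n) \in \mathcal{C}$ we have
\[
\P\{L(\Pi) = (c_1,\ldots,c_n)\} = \P\{\Pi = L^{-1}(c_1,\ldots,c_n)\} = \frac{1}{n!} = \prod_{i=1}^n \frac{1}{n-i+1}.
\]
This joint probability mass function is exactly the product of the uniform probability mass functions on $\{0,1,\ldots,n-i\}$ for $i = 1,\ldots,n$. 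By the standard factorization criterion for independence of discrete random variables, this simultaneously establishes that each marginal $L(\Pi)_i$ is uniform on $\{0, 1, \ldots, n-i\}$ and that the coordinates $L(\Pi)_1, \ldots, L(\Pi)_n$ are mutually independent, yielding both conclusions of the lemma.

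There is no substantial obstacle here, since the argument is essentially a one-line consequence of the bijection property combined with the elementary fact that the uniform distribution on a Cartesian product of finite sets factorizes as a product of uniform marginals. If anything, the only care needed is bookkeeping in the indexing: the $i$-th factor of $\mathcal{C}$ has cardinality $n-i+1$, corresponding to the range $\{0,1,\ldots,n-i\}$ of $L(\Pi)_i$, which matches the statement.
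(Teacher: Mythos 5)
Your proof is correct. The paper states this lemma without proof, treating it as a known property of the Lehmer code, and your derivation --- transporting the uniform distribution on $\mathcal{S}_n$ through the bijection of the preceding lemma, noting $\prod_{i=1}^n (n-i+1) = n!$, and applying the factorization criterion for independence --- is exactly the standard argument the paper implicitly relies on, with the indexing (the $i$th factor being $\{0,\ldots,n-i\}$ of size $n-i+1$) handled correctly.
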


Recall that the true underlying ranking of $\{\theta_1,\ldots,\theta_n\}$ is given by $\theta_{\Pi(1)} < \theta_{\Pi(2)} < \cdots < \theta_{\Pi(n)}.$ By the definition of Lehmer code, $L(\Pi^{-1})_i$ measures the number of items from $\{\theta_{i+1}\ldots,\theta_n\}$ that are smaller than $\theta_i$. Also notice that the $i$th insertion step of the variable-length noisy sorting code is using the auxiliary algorithm to estimate the number of items from $\{\theta_{n-i+1},\ldots,\theta_{n}\}$ that are smaller than $\theta_{n-i}$. Therefore, the $i$th insertion step of the code is essentially estimating the $(n-i)$th entry $L(\Pi^{-1})_{n-i}$ of the Lehmer code. Because there is a one-to-one mapping between permutations and Lehmer codes, every insertion step needs to be correct to output the correct estimation for $\Pi$. For $i\in[n-1]$, let $\mathcal{E}_i$ denote the event that estimation at $i$th insertion step is incorrect. We have
\begin{align*}
    \P(\Pi'=\Pi|\mathcal{A})&=\P(\mathcal{E}_1^c\cap\cdots\cap\mathcal{E}_{n-1}^c|\mathcal{A})\\
    &=\prod_{k=1}^{n-1}\P(\mathcal{E}_k^c \cond\mathcal{A}, \cap_{i=1}^{k-1}\mathcal{E}_i^c)\\
    &=\prod_{k=1}^{n-1}\left(1-\P(\mathcal{E}_k \cond\mathcal{A}, \cap_{i=1}^{k-1}\mathcal{E}_i^c)\right)\\
    &\ge 1-\sum_{k=1}^{n-1}\P(\mathcal{E}_k \cond\mathcal{A}, \cap_{i=1}^{k-1}\mathcal{E}_i^c). 
\end{align*}
By Lemma~\ref{lem:Lehmer-indep}, we know that each of $n$ entries of $L(\Pi^{-1})$ are uniformly distributed and they are independent of each other. Therefore, $\P(\mathcal{E}_i|\mathcal{A},\cap_{k=1}^{i-1}\mathcal{E}_k^c)$ represents the error probability of the variable-length BZ algorithm called at $i$th insertion step given a uniform prior for the correct position. By Lemma~\ref{thm:error-vlbz} and because that we set the tolerated error probability for each insertion step to $\frac{1}{n\log n}$, we know that $\P(\mathcal{E}_i|\mathcal{A}\cap_{k=1}^{i-1}\mathcal{E}_k^c)\le \frac{1}{n\log n}$ for each $i\in[n-1]$. Thus, we get that $\P(\Pi'\neq \Pi|\mathcal{A})\le\frac{1}{\log n}$, and it follows that
\[
\P(\hat{\Pi}=\Pi|\mathcal{A})\ge 1-\P(\Pi'=\Pi|\mathcal{A})-\P(\mathcal{B}^c|\mathcal{A})=1-o(1),
\]
which completes the proof.

\subsection{Proof of Lemma~\ref{lem:noisy_searching}}\label{app:pf-noisy-searching}
Let $u$ be the leaf node in the extended tree $T^{*}$ that represents the correct interval for $\td{\theta}$. Orient all the edges in $T^{*}$ towards the node $u$. By the definition of a tree, such an orientation is unique. Moreover, for each node $v \neq u$ in the tree, there exists exactly one edge connected to $v$ that is directed outwards from $v$, while all other edges connected to $v$ are directed towards $v$.

Recall that in Algorithm~\ref{alg:NST}, a repetition code of length $\beta$ is used for each comparison. At each node $v\neq u$ in $T^{*}$, at most three comparisons are made (two for verification of the interval and one for comparing with the median element). A sufficient condition for the random walk to go along the outgoing edge from $v$ is that all comparisons yield correct responses. Thus, the probability of this event is at least
\[
(1-\tau)^3
\]
where $\tau \triangleq \sum_{i=\ceil{\beta/2}}^{\beta}\binom{\beta}{i}p^i(1-p)^{\beta-i}$. Thus, by taking $\beta = \beta(p,\delta)$, we are guaranteed that this probability is at least $1-\delta$.

Now, to bound the probability of error of Algorithm~\ref{alg:NST}, let $S_\mathrm{f}$ denote the number of forward steps that go along the direction of an edge, and let $S_\mathrm{b}$ denote the number of backward steps that go against the direction of an edge. A sufficient condition for the algorithm to output the correct interval is that $S_\mathrm{f} - S_\mathrm{b} \geq \ceil{\log n}$, since the tree $T$ has depth $\ceil{\log n}$. This condition can be equivalently written as $S_\mathrm{f} - S_\mathrm{b} \geq \log n$ since $S_\mathrm{f}$ and $S_\mathrm{b}$ are integers. Since $S_\mathrm{f} + S_\mathrm{b} = s = \ceil{\frac{\tilde{m}}{3\beta(p,\delta)}}$, it follows that the probability of error of Algorithm~\ref{alg:NST} can be upper bounded by
\begin{align*}
    \P\left\{S_\mathrm{f} - S_\mathrm{b} < \log n\right\} &\leq \P\left\{S_\mathrm{b} > \tfrac{s-\log n}{2}\right\} \\
    &\leq \P\left\{\mathrm{Binom}\left(s,\delta\right)> \tfrac{s-\log n}{2}\right\}\\
    &\leq \exp\left(-sD\left(\frac12-\frac{\log n}{2s}\bigg|\bigg|\delta\right)\right),
\end{align*}
where the last inequality follows by condition~\eqref{eq:lem1_cond1} and the 
Chernoff bound~\cite[Theorem 1]{Hoeffding1963}.
\end{document}